\crefname{section}{\textsection}{\textsection}
\crefname{subsection}{\textsection}{\textsection}
\crefname{subsubsection}{\textsection}{\textsection}
\crefname{paragraph}{\textparagraph}{\textparagraph}
\crefname{thm}{Theorem}{Theorems}
\crefname{minprob}{Minimization Problem}{Minimization Problems}
\DeclareMathOperator{\tr}{tr}
\DeclareMathOperator*{\wstarlim}{w*-lim}
\renewcommand{\Im}{\mathrm{Im}}
\renewcommand{\Re}{\mathrm{Re}}
          \newtheorem{thm}{Theorem}[section]
          \newtheorem{proposition}[thm]{Proposition}
          \newtheorem{lemma}[thm]{Lemma}
          \newtheorem{corollary}[thm]{Corollary}
          \theoremstyle{definition}
          \newtheorem{remark}[thm]{Remark}
\newcommand{\diff}{\mathrm{d}}
\newcommand{\eps}{\varepsilon}
\newcommand{\beq}{\begin{equation}}
\newcommand{\eeq}{\end{equation}}
\newcommand{\bdm}{\begin{displaymath}}
\newcommand{\edm}{\end{displaymath}}
\newcommand{\bdn}{\begin{eqnarray}}
\newcommand{\edn}{\end{eqnarray}}
\newcommand{\bay}{\begin{array}{c}}
\newcommand{\eay}{\end{array}}
\newcommand{\ben}{\begin{enumerate}}
\newcommand{\een}{\end{enumerate}}
\newcommand{\beqn}{\begin{eqnarray}}
\newcommand{\eeqn}{\end{eqnarray}}
\newcommand{\lf}{\left}
\newcommand{\ri}{\right}
\renewcommand{\leq}{\leqslant}
\renewcommand{\geq}{\geqslant}
\newcommand{\verde}[1]{\textcolor{Green}{#1}}
\numberwithin{equation}{section}
\begin{document}
\title{Quasi-Classical Spin Boson Models}

\author[M.\ Correggi]{Michele Correggi}

\address{Dipartimento di Matematica, Politecnico di Milano, Piazza Leonardo
  da Vinci 32, 20133, Milano, Italy.}

\email{michele.correggi@gmail.com}

\urladdr{\url{https://sites.google.com/view/michele-correggi}}

\author[M.\ Falconi]{Marco Falconi}

\address{Dipartimento di Matematica, Politecnico di Milano, Piazza Leonardo
  da Vinci 32, 20133, Milano, Italy.}

\email{marco.falconi@polimi.it}

\urladdr{\url{https://www.mfmat.org/}}

\author[M.\ Merkli]{Marco Merkli}

\address{Department of Mathematics and Statistics, Memorial University of
  Newfoundland, St. John's, NL A1C 5S7, Canada.}

\email{merkli@mun.ca}

\urladdr{\url{https://www.math.mun.ca/~merkli/}}




\begin{abstract}
  In this short note we study Spin-Boson Models from the Quasi-Classical
  standpoint. In the Quasi-Classical limit, the field becomes macroscopic
  while the particles it interacts with, they remain quantum. As a result,
  the field becomes a classical environment that drives the particle system
  with an explicit effective dynamics.
\end{abstract}

\maketitle

\onehalfspacing{}


\section{Introduction and Main Result}
\label{sec:intr-main-results}

The spin-boson model describes the interaction between a
bosonic scalar field, playing the role of environment or
  reservoir, and a `small' quantum system, whose spin degrees of
  freedom are the only relevant ones. It has been widely studied in the
mathematical physics and physics literature, from various standpoints. The
spin-boson model is one of {\em the} paradigmatic examples of an open quantum
system. It is used to investigate general open system phenomena such as
decoherence, entanglement, thermalization, to test the validity of markovian
approximations and to analyze non-markovian behavior. We cannot attempt to
give an exhaustive list of references of the model. We point the
mathematically interested reader to the following inconclusive list of works,
\citep[][]{amour2016alnarxiv,amour2017arxiv,amour2017jmp,arai1990jmp,arai1991jmp,arai1997jfa,hasler2011ahp},
as well as to references therein contained.

On the more physical side, the spin-boson model is used to describe
atom-radiation interaction in quantum optics, qubit-noise coupling in quantum
information and computation, environment induced transport phenomena and
chemical processes in quantum chemistry. Some of these aspects can be found
in the references
\citep{leggett1987rmp,palma1996prsla,koenenberg2015cmp,xu1994cp,konenberg2016jmp,mohseni2014cup,merkli2020ap,merkli2022qI,merkli2022qII,joye2020ahp,merkli2016jmc,merkli2013jmc,merkli2011qic,merkli2007prl}.

\medskip

For the purpose of this paper, in which we focus on mathematical aspects, we
assume that the reader is familiar with the basic mathematical tools of free
quantum fields, namely Fock spaces, second quantization,
creation/annihilation operators, etc.; if not, they may refer, \emph{e.g.},
to \citep{cook1951pnasusa, derezinski2013cmmp}. Let us denote by
$\mathscr{H}$ the Hilbert space of the spin system, and by $\mathfrak{h}$ the
Hilbert space of a single bosonic excitation.  We denote by $\mathcal{G}_{\varepsilon}$
the second quantization functor\footnote{We use a somewhat unorthodox
  notation for the second quantization functor. We denote by $\mathcal{G}_{\varepsilon}^{\rm
    s}(\mathfrak{h})=\bigoplus_{n\in \mathbb{N}} \mathfrak{h}_n$ the symmetric Fock space over $\mathfrak{h}$ in which the
  canonical creation and annihilation operators have $\varepsilon$-dependent
  commutation relations:
  \begin{equation*}
    [a_{\varepsilon}(f),a_{\varepsilon}^{*}(g)]= \varepsilon\langle f  , g \rangle_{\mathfrak{h}}\;.
  \end{equation*}
  The second quantization of an operator $A$ on $\mathfrak{h}$ is written thus as
  \begin{equation*}
    \mathrm{d}\mathcal{G}_{\varepsilon}(A)= \sum_{i,j=0}^{\infty}A_{ij}a^{*}_{\varepsilon,i}a_{\varepsilon,j}\;,
  \end{equation*}
  with $A_{ij}=\langle e_i , A e_j \rangle_{\mathfrak{h}}$, and $a^{\sharp}_{\varepsilon,k}=a_{\varepsilon}^{\sharp}(e_k)$, with
  $\{e_k\}_{k\in \mathbb{N}}$ an O.N.B.\ of $\mathfrak{h}$. The quasi-classical parameter $\varepsilon$
  clearly plays the role of a semiclassical parameter for the (Segal) field
  $\varphi_{\varepsilon}(f)= a^{*}_{\varepsilon}(f)+ a_{\varepsilon}(f)$: as $\varepsilon\to 0$, the field becomes a
  classical commutative observable \citep[see][for a gentler and more
  detailed introduction to the quasi-classical scaling]{correggi2019arxiv},
  and
  \citep{carlone2021sima,correggi2017ahp,correggi2017arxiv,correggi2020arxiv}
  for other recent papers concerning the quasi-classical regime.},
  where $ 0 < \eps \ll 1 $ is a scale parameter. The spin-boson Hamiltonian
has the general form
\begin{equation*}
  H_{\varepsilon}= \mathfrak{S} \otimes 1 + \nu(\varepsilon)\, 1\otimes \mathrm{d}\mathcal{G}_{\varepsilon}(\omega) + \mathfrak{s}\otimes \varphi_{\varepsilon}(g)\;,
\end{equation*}
as an operator on $\mathscr{H}\otimes \mathcal{G}_{\varepsilon}^{\rm
  s}(\mathfrak{h})$. Here, $\mathfrak{S},\mathfrak{s}\in
\mathscr{B}(\mathscr{H})$ are self-adjoint, $\nu(\varepsilon)$ is
either $\nu(\varepsilon)=1$ or $\nu(\varepsilon)=\frac{1}{\varepsilon}$,
$\omega$ is a positive -- with possibly unbounded inverse -- operator on
$\mathfrak{h}$, and $g\in \mathfrak{h}$. The bipartition of the total Hilbert
space $\mathscr H\otimes\mathfrak \mathcal G_\varepsilon^{\rm
  s}(\mathfrak h)$ reflects the separation of the total physical system into
two subsystems. Commonly, especially in the physics literature, the Hilbert
space $\mathscr H$ is finite-dimensional. For instance,
$\mathscr H$ has dimension $2^N$ in the case of $N$ spins $1/2$, or
$N$ qubits. One of the most studied cases is $N=2$, hence the name
``spin-boson'' model. A further characteristic of the spin-boson model is
that the interaction operator is of the simple product form $\mathfrak
s\otimes\varphi_\varepsilon(g)$, or a finite sum of such terms. This
simplifies the (rigorous) analysis of the model. Nevertheless, other models
in which the interaction term is more complicated, are also of interest. For
instance, in the Nelson, the Pauli-Fierz or the polaron model, the
interaction operator is of the form $\int_{\mathbb R^3}\mathfrak
s(k)\otimes\varphi_\varepsilon(k)\mathrm d^3k$. While these models are also
treatable with the methods explained here (see \citep{correggi2019arxiv}), we
focus in the present manuscript, for ease of presentation, on the simple form
of the interaction as in $H_\varepsilon$ above.

We shall consider a more general setup though. The guiding principle is that
we want to describe two {\em qualitatively unequal} interacting parts. The
`spin' part which is `small' and the boson part (or field, reservoir,
environment) which is `large'. A quantification of what small versus large
means can be implemented in different ways, depending on the physical reality
being modeled. For instance, finite dimensional ($\mathscr H$) versus
infinite dimensional ($\mathcal G^{\rm s}_{\varepsilon}(\mathfrak{h})$)
Hilbert spaces, or Hamiltonians with discrete spectrum ($\mathfrak S$) versus
Hamiltonians with continuous spectrum
($\mathrm{d}\mathcal{G}_{\varepsilon}(\omega)$). In the quasi-classical setup
we are discussing here, the field is large in the sense that it is in a state
which contains many more particles (or excitations) than the spin system
does. This is formalized by saying that the (average) number of particles of
the spin system is fixed, while that number in the field state is $\propto
1/\varepsilon \gg 1$ -- constituting the quasi-classical limit.

We will soon explain how the choice of $\nu(\varepsilon)$ affects the
quasi-classical limit $\varepsilon\to 0$. There are further possible
generalizations of the model, namely by taking $\mathfrak{s}$ and $g$ to be
vector-valued, or by taking $\mathfrak{S}$ to be only bounded from below, or
by taking $g\notin \mathfrak{h}$. Self-adjointness for the latter case has
been recently studied in \citep{lonigro2022jmp}. For our purposes, these
generalizations do not present serious obstacles as long as $H_{\varepsilon}$
can be defined as a self-adjoint operator, however for the sake of clarity we
keep the setting as described above.

\begin{proposition}[Self-adjointness of $ H_{\eps} $]
  \label{prop:1}
  \mbox{}   \\
  For all $g\in \mathfrak{h}$, $H_{\varepsilon}$ is essentially self-adjoint
  on\footnote{We denote by
    $\mathcal{C}_0^{\infty}\bigl(\mathrm{d}\mathcal{G}_{\varepsilon}(1)\bigr)$
    the Fock space vectors with a finite number of particles (\emph{i.e.},
    for which the $k$-particle components are all zero for $k>\underline{k}$,
    for some $\underline{k}\in \mathbb{N}$).}
  $D\bigl(\mathrm{d}\mathcal{G}_{\varepsilon}(\omega)\bigr)\cap
  \mathcal{C}_0^{\infty}\bigl(\mathrm{d}\mathcal{G}_{\varepsilon}(1)\bigr)$. In
  addition, if both $g\in \mathfrak{h}$ and $\omega^{-1/2}g\in \mathfrak{h}$,
  then $H_{\varepsilon}$ is self-adjoint on
  $D\bigl(\mathrm{d}\mathcal{G}_{\varepsilon}(\omega)\bigr)$ and bounded from
  below.
\end{proposition}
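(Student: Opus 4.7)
The plan is to treat $\mathfrak{s}\otimes\varphi_\varepsilon(g)$ as a perturbation of the manifestly self-adjoint free part
\begin{equation*}
H_0 := \mathfrak{S}\otimes 1 + \nu(\varepsilon)\,1\otimes\mathrm{d}\mathcal{G}_\varepsilon(\omega),
\end{equation*}
with the two statements handled separately depending on which relative bound for $\varphi_\varepsilon(g)$ is available.

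\emph{First statement.} Under the sole hypothesis $g\in\mathfrak{h}$ I would appeal to Nelson's analytic vector theorem. The starting point is the standard estimate
\begin{equation*}
\|a_\varepsilon^{\sharp}(g)\psi\| \leq \sqrt{\varepsilon}\,\|g\|\,\|(\mathrm{d}\mathcal{G}_\varepsilon(1)+\varepsilon)^{1/2}\psi\|,
\end{equation*}
valid for any $g\in\mathfrak{h}$, which by iteration on fixed-particle sectors yields $\|\varphi_\varepsilon(g)^n\psi\|\leq c(\psi)^n\sqrt{n!}$ for every $\psi$ of definite particle number. Since $\mathfrak{S}\otimes 1$ is bounded and $1\otimes\mathrm{d}\mathcal{G}_\varepsilon(\omega)$ preserves the particle number, the same factorial growth propagates to powers of $H_\varepsilon$ applied to such $\psi$, exhibiting a total set of analytic vectors inside $D\bigl(\mathrm{d}\mathcal{G}_\varepsilon(\omega)\bigr)\cap\mathcal{C}_0^{\infty}\bigl(\mathrm{d}\mathcal{G}_\varepsilon(1)\bigr)$. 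Essential self-adjointness follows from Nelson's theorem.

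\emph{Second statement.} The additional assumption $\omega^{-1/2}g\in\mathfrak{h}$ upgrades the previous estimate to the sharper relative bound
\begin{equation*}
\|\varphi_\varepsilon(g)\psi\|^2 \leq 2\varepsilon\|\omega^{-1/2}g\|^2\,\langle\psi,\,\mathrm{d}\mathcal{G}_\varepsilon(\omega)\psi\rangle + 2\varepsilon\|g\|^2\|\psi\|^2,
\end{equation*}
which follows from $\|a_\varepsilon(g)\psi\|^2\leq\varepsilon\|\omega^{-1/2}g\|^2\langle\psi,\mathrm{d}\mathcal{G}_\varepsilon(\omega)\psi\rangle$ together with the $\varepsilon$-CCR. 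An elementary Young inequality then produces, for every $\delta>0$, a constant $C_\delta$ with
\begin{equation*}
\|\mathfrak{s}\otimes\varphi_\varepsilon(g)\psi\| \leq \delta\,\|1\otimes\mathrm{d}\mathcal{G}_\varepsilon(\omega)\psi\| + C_\delta\|\psi\|,
\end{equation*}
using boundedness of $\mathfrak{s}\otimes 1$. Together with the bounded $\mathfrak{S}\otimes 1$ this shows that $H_\varepsilon-\nu(\varepsilon)\,1\otimes\mathrm{d}\mathcal{G}_\varepsilon(\omega)$ is infinitesimally $\mathrm{d}\mathcal{G}_\varepsilon(\omega)$-bounded, and Kato--Rellich yields self-adjointness on $D\bigl(\mathrm{d}\mathcal{G}_\varepsilon(\omega)\bigr)$. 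Boundedness from below follows from the analogous form estimate obtained via Cauchy--Schwarz.

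\emph{Main obstacle.} The genuinely delicate step is the first assertion: without the hypothesis $\omega^{-1/2}g\in\mathfrak{h}$ one cannot invoke Kato--Rellich, and one must rely on the analytic vector structure of the finite-particle subspace. Verifying that this structure is not destroyed by the unbounded $\mathrm{d}\mathcal{G}_\varepsilon(\omega)$ rests on the crucial fact that the free part preserves particle number, so all estimates can be performed sector by sector; the remaining ingredients are standard second-quantization bounds.
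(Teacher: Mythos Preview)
Your treatment of the second statement via Kato--Rellich is correct and coincides with the paper's argument.

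The first statement, however, has a genuine gap. Your analytic-vector strategy requires iterating $H_\varepsilon$ on finite-particle vectors, but the core $D\bigl(\mathrm{d}\mathcal{G}_\varepsilon(\omega)\bigr)\cap\mathcal{C}_0^{\infty}\bigl(\mathrm{d}\mathcal{G}_\varepsilon(1)\bigr)$ is in general \emph{not} $H_\varepsilon$-invariant when $\omega$ is unbounded. The obstruction is the commutator
\begin{equation*}
[\mathrm{d}\mathcal{G}_\varepsilon(\omega),\,a_\varepsilon^{*}(g)]=\varepsilon\,a_\varepsilon^{*}(\omega g),
\end{equation*}
which shows that applying $\mathrm{d}\mathcal{G}_\varepsilon(\omega)$ after $\varphi_\varepsilon(g)$ already forces $\omega g\in\mathfrak{h}$, and iterated powers of $H_\varepsilon$ produce terms involving $a_\varepsilon^{\sharp}(\omega^{k}g)$ for all $k$. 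None of this is guaranteed by the sole hypothesis $g\in\mathfrak{h}$. Hence for generic $g$ you cannot even define $H_\varepsilon^{2}\psi$ on your candidate set of vectors, let alone obtain factorial bounds on $\lVert H_\varepsilon^{n}\psi\rVert$. The fact that the free part preserves particle number does not help: the difficulty comes from the interplay between the unbounded $\mathrm{d}\mathcal{G}_\varepsilon(\omega)$ and the interaction, not from particle-number growth. Even setting this aside, $\mathrm{d}\mathcal{G}_\varepsilon(\omega)$ is unbounded on each fixed sector, so the claimed propagation of the $\sqrt{n!}$ estimate to powers of the full $H_\varepsilon$ would need a separate argument that you have not supplied.

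The paper does not offer a self-contained proof of this part either; it defers to an external reference (Falconi, \emph{Math.\ Phys.\ Anal.\ Geom.}\ 2015), where essential self-adjointness is obtained for a general class of particle--field Hamiltonians by techniques that do not require $g\in\bigcap_{k}D(\omega^{k})$.
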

\begin{proof}
  The essential self-adjointness is proved in \citep{falconi2015mpag}, for a
  general class of operators describing the interaction between matter and
  radiation; self-adjointness and boundedness from below with the additional
  assumption $\omega^{-1/2}g\in \mathfrak{h}$ is an easy consequence of the
  Kato-Rellich theorem on relatively bounded perturbations of self-adjoint
  operators.
\end{proof}

\begin{remark}[Form factors]
  \label{rem:1}
  \mbox{}   \\
  There are form factors $g\in \mathfrak{h}$ with $\omega^{-1/2}g\notin
  \mathfrak{h}$ such that $H_{\varepsilon}$ is unbounded from below (even
  though it is still self-adjoint). This is analogous to what happens for the
  van Hove model, and it is caused by some infrared singularity: in physical
  models, $\omega^{-1/2}$ is unbounded (and thus it could happen that
  $\omega^{-1/2}g\notin \mathfrak{h}$) only if the field is massless
  \citep[see][for further details]{derezinski2003ahp}. For our purposes,
  uniqueness of the quantum dynamics (\emph{i.e.} essential self-adjointness
  of $H_{\varepsilon}$) is enough.
\end{remark}

\subsection{Main Result}
\label{sec:main-results}

Our goal is to characterize explicitly the dynamics of quantum states, in the
limit $\varepsilon\to 0$. In order to do that, let us define quantum states
as density matrices
\begin{equation*}
  \Gamma_{\varepsilon}\in \mathfrak{L}^1_{+,1}\bigl(\mathscr{H}\otimes \mathcal{G}_{\varepsilon}^{\rm s}(\mathfrak{h})\bigr)\;,
\end{equation*}
where $\mathfrak{L}^1$ is the trace ideal, and $\mathfrak{L}^1_{+,1}$ stands
for elements in the positive cone, with trace one. A time-evolved state is
then given by
\begin{equation*}
  \Gamma_{\varepsilon}(t)= e^{-it H_{\varepsilon}}\Gamma_{\varepsilon}e^{it H_{\varepsilon}}\;.
\end{equation*}
To be more precise, the question we will answer in this note is the
following:
\begin{quote}
  \emph{Knowing the behavior of the initial state $\Gamma_{\varepsilon}$ as
    $\varepsilon\to 0$, what is the behavior of $\Gamma_{\varepsilon}(t)$ as
    $\varepsilon\to 0$, for any time $t\in \mathbb{R}$?}
\end{quote}
To answer the question, we shall first clarify what the general behavior is
of a quantum state $\Gamma_{\varepsilon}$, as $\varepsilon\to 0$. The
intuition is that as the boson degrees of freedom become classical, the state
-- restricted to the boson subsystem -- becomes classical as well (in the
statistical mechanics sense,\emph{ i.e.} a probability measure); on the other
hand, the spin subsystem retains its quantum nature, and thus its description
shall still be given by a density matrix.

This picture is satisfactorily described mathematically in terms of a
so-called \emph{state-valued measure}, introduced in
\citep{correggi2019arxiv,falconi2017arxiv}. A state-valued measure is a
couple $\mathfrak{m}=(\mu,\gamma)$ consisting of a (Borel Radon) measure
$\mu$ on the classical configuration space $\mathfrak{h}$ for the Boson
subsystem, and a $\mu$-almost-everywhere defined function $\mathfrak{h}\ni z
\mapsto \gamma(z)\in \mathfrak{L}^1_{+,1}(\mathscr{H})$ with values in the
density matrices of the Spin subsystem. The function $\gamma(z)$ acts as a
vector-valued Radon-Nikod{\'y}m derivative (it is in fact one), and thus the
measure element $\mathrm{d}\mathfrak{m}(z)$ can be written as
\begin{equation*}
  \mathrm{d}\mathfrak{m}(z)=\gamma(z)\mathrm{d}\mu(z)\;.
\end{equation*}
Integrating a scalar measurable bounded function $F$ with respect to
$\mathfrak{m}$ gives an element in $\mathfrak{L}^1(\mathscr{H})$, that we
denote by
\begin{equation*}
  \int_{\mathfrak{h}}^{}F(z)  \mathrm{d}\mathfrak{m}(z)= \int_{\mathfrak{h}}^{}F(z)\gamma(z)  \mathrm{d}\mu(z)\;.
\end{equation*}
It is also possible to integrate suitable functions $\mathfrak{F}$ with
values in the bounded operators on $\mathscr{H}$, however in this case the
relative order between the function and the measure matters: in general,
\begin{gather*}
  \int_{\mathfrak{h}}^{}\mathfrak{F}(z)  \mathrm{d}\mathfrak{m}(z)= \int_{\mathfrak{h}}^{}\mathfrak{F}(z)\gamma(z)  \mathrm{d}\mu(z)\neq  \int_{\mathfrak{h}}^{}\gamma(z)\mathfrak{F}(z)  \mathrm{d}\mu(z)=\int_{\mathfrak{h}}^{}  \mathrm{d}\mathfrak{m}(z)\mathfrak{F}(z)\;.
\end{gather*}
A detailed study of state-valued measures and their properties is given in
the aforementioned references
\citep{correggi2019arxiv,correggi2020arxiv,falconi2017arxiv}; we will make
extensive use of the results proved in those papers, so the interested reader
shall refer to them.

The last concept needed to understand the main results is that of the
(non-commutative) Fourier transform of a quantum state, and of the Fourier
transform of a state-valued measure. These tools allow to put quantum states
and state-valued measures on the same grounds, to set up the quasi-classical
convergence of the former to the latter.  The \emph{Fourier transform of a
  quantum state} $\Gamma_{\varepsilon}$ is the function
$\hat{\Gamma}_{\varepsilon}:\mathfrak{h}\to \mathfrak{L}^1(\mathscr{H})$
given by
\begin{equation*}
  \hat{\Gamma}_{\varepsilon}(\eta)= \tr_{\mathcal{G}_{\varepsilon}^{\rm s}(\mathfrak{h})} \bigl(\Gamma_{\varepsilon} (W_{\varepsilon}(\eta))\bigr)\;,
\end{equation*}
with $W_{\varepsilon}(\eta)$ being the bosonic Weyl operator
\begin{equation*}
  W_{\varepsilon}(\eta)= e^{i\varphi_{\varepsilon}(\eta)}= e^{i\bigl(a^{*}_{\varepsilon}(\eta)+ a_{\varepsilon}(\eta)\bigr)}\;,
\end{equation*}
and $\tr_{\mathcal{G}_{\varepsilon}^{\rm s}(\mathfrak{h})}$ denoting the
partial trace w.r.t.\ the bosonic degrees of freedom.  The
\emph{Fourier transform of a state-valued measure} $\mathfrak{m}$ is the
function $\hat{\mathfrak{m}}:\mathfrak{h}\to \mathfrak{L}^1(\mathscr{H})$
given by
\begin{equation*}
  \hat{\mathfrak{m}}(\eta)= \int_{\mathfrak{h}}^{}e^{2i\Re \langle \eta  , z \rangle_{\mathfrak{h}}}  \mathrm{d}\mathfrak{m}(z)=\int_{\mathfrak{h}}^{}e^{2i\Re \langle \eta  , z \rangle_{\mathfrak{h}}}  \gamma(z)\mathrm{d}\mu(z)\;.
\end{equation*}
We say that a state $\Gamma_{\varepsilon}$ converges quasi-classically to a
state-valued measure $\mathfrak{m}$, denoted by
$\Gamma_{\varepsilon}\underset{\varepsilon\to
  0}{\longrightarrow}\mathfrak{m}$, if and only if for all $\eta\in
\mathfrak{h}$,
\begin{equation*}
  \wstarlim_{\varepsilon\to 0}\, \hat{\Gamma}_{\varepsilon}(\eta) = \hat{\mathfrak m}(\eta)\;,
\end{equation*}
where $\wstarlim$ stands for the limit in the weak-* topology of
$\mathfrak{L}^1(\mathscr{H})$, \emph{i.e.} when tested with compact operators
$\mathfrak{k}\in \mathfrak{L}^{\infty}(\mathscr{H})$:
$$
\Gamma_\varepsilon \underset{\varepsilon\to 0}{\longrightarrow} \mathfrak
m\quad \overset{\rm def}{\Longleftrightarrow} \quad \tr_{\mathscr{H}}
\bigl(\hat{\Gamma}_\varepsilon(\eta) \mathfrak k\bigr)
\underset{\varepsilon\to 0}{\longrightarrow} \tr_{\mathscr{H}}
\bigl(\hat{\mathfrak m}(\eta) \mathfrak k\bigr),\ \ \forall \eta\in\mathfrak
h, \mathfrak{k}\in \mathfrak{L}^{\infty}(\mathscr{H}).
$$

\begin{proposition}[Quasi-classical convergence {\citep[][Prop.\
    2.3]{correggi2019arxiv}}]
  \label{prop:2}
  \mbox{}   \\
  Let $\Gamma_{\varepsilon}$ be a state such that there exist $\delta, C>0$
  with
  \begin{equation}
    \label{eq:1}
    \tr\Bigl((\mathrm{d}\mathcal{G}_{\varepsilon}(1)+1)^{\delta}\Gamma_{\varepsilon}\Bigr)\leq C\; .
  \end{equation}
  Then there exists a sequence $\varepsilon_n\underset{n\to
    \infty}{\longrightarrow} 0$, and a state-valued measure $\mathfrak{m}$
  (in general depending on the sequence) such that
  \begin{equation*}
    \Gamma_{\varepsilon_n}\underset{n\to \infty}{\longrightarrow}\mathfrak{m}\;.
  \end{equation*}
\end{proposition}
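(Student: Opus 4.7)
The plan is to reduce the statement to a Banach--Alaoglu compactness argument for the family of non-commutative Fourier transforms $\hat{\Gamma}_\varepsilon : \mathfrak{h} \to \mathfrak{L}^1(\mathscr{H})$, combined with a non-commutative Bochner-type reconstruction theorem that identifies the resulting pointwise weak-$*$ limit as the Fourier transform of a genuine state-valued measure. The moment bound \eqref{eq:1} enters exclusively through a uniform equicontinuity estimate, needed to promote the limit from a countable dense subset of $\mathfrak{h}$ to all of $\mathfrak{h}$.

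First, since $W_\varepsilon(\eta)$ is unitary on $\mathcal{G}_\varepsilon^{\rm s}(\mathfrak{h})$ and $\Gamma_\varepsilon$ is a state, the partial-trace estimate yields $\|\hat{\Gamma}_\varepsilon(\eta)\|_{\mathfrak{L}^1(\mathscr{H})} \leq 1$ uniformly in $\varepsilon \in (0,1]$ and $\eta \in \mathfrak{h}$. Hence each $\hat{\Gamma}_\varepsilon(\eta)$ lies in the closed unit ball of $\mathfrak{L}^1(\mathscr{H}) = \mathfrak{L}^\infty(\mathscr{H})^*$, which is weak-$*$ compact by Banach--Alaoglu and metrizable since $\mathscr{H}$ is separable. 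Choosing a countable dense set $\{\eta_k\}_{k \in \mathbb{N}} \subset \mathfrak{h}$ and carrying out a standard diagonal extraction furnishes a subsequence $\varepsilon_n \to 0$ along which $\hat{\Gamma}_{\varepsilon_n}(\eta_k) \to T(\eta_k)$ in weak-$*$ for every $k$, for some $T(\eta_k) \in \mathfrak{L}^1(\mathscr{H})$.

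The hypothesis \eqref{eq:1} is then used to extend the convergence to all $\eta \in \mathfrak{h}$. Writing $W_\varepsilon(\eta) - W_\varepsilon(\eta') = W_\varepsilon(\eta')\bigl[e^{-i\varepsilon\sigma}W_\varepsilon(\eta-\eta') - 1\bigr]$ via the Weyl CCR, expanding through Duhamel's formula, and invoking the standard operator bound $\|\varphi_\varepsilon(\xi)\psi\| \leq C\|\xi\|_{\mathfrak{h}}\,\|(\mathrm{d}\mathcal{G}_\varepsilon(1)+1)^{1/2}\psi\|$ together with a Chebyshev-type truncation of $\Gamma_\varepsilon$ at large values of $\mathrm{d}\mathcal{G}_\varepsilon(1)$, I expect an equi-Hölder estimate
\[
  \bigl\|\hat{\Gamma}_\varepsilon(\eta) - \hat{\Gamma}_\varepsilon(\eta')\bigr\|_{\mathfrak{L}^1(\mathscr{H})} \leq C\,\|\eta - \eta'\|_{\mathfrak{h}}^{\alpha}
\]
for some $\alpha = \alpha(\delta) > 0$, uniformly in $\varepsilon$. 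This equicontinuity lifts the pointwise convergence on $\{\eta_k\}$ to pointwise weak-$*$ convergence on all of $\mathfrak{h}$, producing a continuous limit $\hat{\mathfrak{m}} : \mathfrak{h} \to \mathfrak{L}^1(\mathscr{H})$.

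Finally, one must recognize $\hat{\mathfrak{m}}$ as the Fourier transform of a bona fide state-valued measure. Each $\hat{\Gamma}_\varepsilon$ is positive-definite in the $\varepsilon$-twisted sense inherited from the Weyl CCR and the positivity of $\Gamma_\varepsilon$; as $\varepsilon \to 0$ the twisting phase disappears, so $\hat{\mathfrak{m}}$ is an ordinary operator-valued positive-definite character on $\mathfrak{h}$, continuous at the origin, with $\hat{\mathfrak{m}}(0)$ a positive trace-class operator on $\mathscr{H}$. The non-commutative Bochner theorem for state-valued measures established in \citep{falconi2017arxiv,correggi2019arxiv} then delivers a unique state-valued measure $\mathfrak{m}$ whose Fourier transform coincides with $\hat{\mathfrak{m}}$, whence $\Gamma_{\varepsilon_n} \to \mathfrak{m}$ by the very definition of quasi-classical convergence. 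The main obstacle is precisely this last step: the vector-valued Bochner representation on the infinite-dimensional space $\mathfrak{h}$ is the technical core of the state-valued-measure formalism and is invoked here as a black box from the cited references, while the preceding compactness and equicontinuity steps are essentially soft.
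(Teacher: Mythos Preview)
Your approach is essentially the one the paper sketches: extract a pointwise weak-$*$ limit of $\hat{\Gamma}_{\varepsilon}$ by compactness, use the moment bound \eqref{eq:1} to secure (equi)continuity of the limit, observe that the $\varepsilon$-twisted complete positive definiteness degenerates to ordinary complete positive definiteness, and invoke the operator-valued Bochner theorem from \citep{falconi2017arxiv,correggi2019arxiv}. The compactness/equicontinuity part of your argument is fine and matches the paper's.

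There is, however, one genuine omission. In infinite dimensions, positive definiteness together with norm-continuity of $\hat{\mathfrak{m}}$ at the origin yields only a \emph{cylindrical} state-valued measure, not a Borel Radon one; countable additivity can fail. The paper is explicit about this: Bochner gives a cylindrical $\mathfrak{m}$, and then \eqref{eq:1} is used a \emph{second} time to prove that $\mathfrak{m}$ is tight, hence a genuine Radon measure. Concretely, the bound \eqref{eq:1} passes to the limit as $\int_{\mathfrak{h}} (1+\lVert z\rVert^{2})^{\delta}\,\mathrm{d}\mu(z)\leq C$, which controls the mass outside large balls uniformly over the cylindrical projections. Your write-up treats ``continuous at the origin'' as if it were the full hypothesis of Bochner and never mentions tightness; without that step the conclusion $\Gamma_{\varepsilon_n}\to\mathfrak{m}$ with $\mathfrak{m}$ a state-valued measure in the sense of the paper is not established. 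Adding the tightness argument (or at least flagging it as the second use of \eqref{eq:1}) closes the gap.
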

\begin{proof}[Proof (sketch)]
  The proof of this proposition adapts to the quasi-classical setting the
  semiclassical analysis for quantum fields developed by Ammari and Nier in
  \citep{ammari2008ahp,ammari2009jmp,ammari2011jmpa,ammari2015asns}, with
  some crucial differences. A complete proof is given in
  \citep{correggi2019arxiv}, however the key ideas could be summarized as
  follows.

  The Fourier transform of a quantum state enjoys some special properties
  \citep[see][]{segal1959mfmdvs,segal1961cjm}, namely:
  \begin{itemize}
  \item $\tr_{\mathscr{H}}\bigl(\hat{\Gamma}_{\varepsilon}(0)\bigr)=1$;
    
  \item $\hat{\Gamma}_{\varepsilon}$ \emph{is weak-* continuous when
      restricted to any finite-dimensional subspace of $\mathfrak{h}$};
    
  \item $\hat{\Gamma}_{\varepsilon}$ is \emph{``quantum-'' completely
      positive definite}: for any finite collection
    $\{\eta_j\}_{j=1}^J\subset \mathfrak{h}$, and
    $\{\mathfrak{t}_j\}_{j=1}^J\subset \mathscr{B}(\mathscr{H})$,
    \begin{equation*}
      \sum_{j,k=1}^J\mathfrak{t}_j\hat{\Gamma}_{\varepsilon}(\eta_j-\eta_k)\mathfrak{t}_k^{*}e^{i\varepsilon \Im \langle \eta_j  , \eta_k \rangle_{}}\geq 0
    \end{equation*}
    as an operator on $\mathscr{H}$.
  \end{itemize}
  Intuitively, by taking the ($\mathfrak{h}$-pointwise) weak-* limit (using a
  compactness argument), one tries to prove that there exists a sequence
  $\varepsilon_n\to 0$ such that $\hat{\Gamma}_0=\lim_{n\to
    \infty}\hat{\Gamma}_{\varepsilon_{n}}$ satisfies
  \begin{itemize}
  \item $\tr_{\mathscr{H}}\bigl(\hat{\Gamma}_0(0)\bigr)=1$;
    
  \item $\hat{\Gamma}_0$ is \emph{weak-* continuous} when restricted to any
    finite-dimensional subspace of $\mathfrak{h}$;
    
  \item $\hat{\Gamma}_0$ is \emph{completely positive definite}: for any
    finite collection $\{\eta_j\}_{j=1}^J\subset \mathfrak{h}$, and
    $\{\mathfrak{t}_j\}_{j=1}^J\subset \mathscr{B}(\mathscr{H})$,
    \begin{equation*}
      \sum_{j,k=1}^J\mathfrak{t}_j\hat{\Gamma}_0(\eta_j-\eta_k)\mathfrak{t}_k^{*}\geq 0\;.
    \end{equation*}
  \end{itemize}
  It turns out that, under the assumption \eqref{eq:1} above, the second and
  third properties are indeed satisfied, thus by the infinite dimensional
  version of Bochner's theorem \citep{falconi2017arxiv}, $\hat{\Gamma}_0$
  identifies uniquely a \emph{cylindrical state-valued measure}\footnote{A
    cylindrical measure is a finitely additive measure that is
    $\sigma$-additive on any subalgebra of cylinders generated by a finite
    number of vectors.}  $\mathfrak{m}$. In addition, \eqref{eq:1} also
  implies that $\mathfrak{m}$ is tight, and thus a Borel Radon measure. The
  first property, namely that the mass is preserved in the limit, \emph{does
    not hold} in general in the quasi-classical case, contrarily to the
  semiclassical case where it is again ensured by \eqref{eq:1}. This is due
  to the fact that some mass may be lost ``at infinity'' if the spin
  subsystem \verde {has infinitely many degrees of freedom}, see
  \cref{sec:losing-mass-through} for a detailed discussion.
\end{proof}

We are now in a position to state the main result of this note, in an
informal but intuitive manner.
\begin{thm}[Quasi-classical dynamics]
  \label{thm:1}
  \mbox{}   \\
  Let $\Gamma_{\varepsilon}\in \mathfrak{L}^1_{+,1}\bigl(\mathscr{H}\otimes
  \mathcal{G}_{\varepsilon}^{\rm s}(\mathfrak{h})\bigr)$ be such that there
  exists $\delta,C>0$ such that, uniformly w.r.t.\ $\varepsilon\in (0,1)$,
  \begin{equation*}
    \tr\Bigl( \bigl(\mathrm{d}\mathcal{G}_{\varepsilon}(1)+1\bigr)^{\delta} \Gamma_{\varepsilon} \Bigr)\leq C\;.
  \end{equation*}
  Then there is a sequence $\varepsilon_n\rightarrow 0$ such that,
    with $\nu := \lim_{\varepsilon\to 0}\varepsilon\nu(\varepsilon)$, the
  following diagram is commutative, for any $t\in \mathbb{R}$:
 
  \begin{equation*}
    \begin{tikzcd}
      \Gamma_{\varepsilon_n} \arrow[r, "\text{quantum evolution}",mapsto] \arrow[swap,d, "\varepsilon_n \underset{n\to \infty}{\longrightarrow} 0",mapsto] &[2.5cm] \Gamma_{\varepsilon_n}(t) \arrow[d, "\varepsilon_n\underset{n\to \infty}{\longrightarrow}0",mapsto]	\\[1cm]
      \gamma(z) \: \diff \mu(z) \arrow[swap,r, "\text{quasi-classical
        evolution}",mapsto] & \mathfrak{U}_{t,0}(z) \gamma(z)
      \mathfrak{U}^{*}_{t,0}(z) \: \diff \lf( e^{-it\nu\omega}\, _{\star}
      \,\mu \ri)(z).
    \end{tikzcd}
  \end{equation*}
\end{thm}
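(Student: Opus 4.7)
The strategy has three parts: propagate the number-moment bound uniformly in $\varepsilon$ along the unitary flow of $H_\varepsilon$, extract a single subsequence $\varepsilon_n\to 0$ along which $\Gamma_{\varepsilon_n}(t)$ converges quasi-classically for every $t\in\R$, and identify the limit by taking $\varepsilon_n\to 0$ term by term in the Heisenberg equation for $\hat{\Gamma}_\varepsilon(t,\eta)$ tested against compact spin observables.

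\emph{Step one: a priori estimates and compactness.} I would first propagate the hypothesis $\tr((N_\varepsilon+1)^\delta \Gamma_\varepsilon)\leq C$, with $N_\varepsilon:=\mathrm{d}\mathcal{G}_\varepsilon(1)$, into $\tr((N_\varepsilon+1)^\delta \Gamma_\varepsilon(t))\leq C\,e^{c|t|}$ uniformly in $\varepsilon\in(0,1)$, by a Grönwall-type argument based on $[N_\varepsilon,\varphi_\varepsilon(g)]=-\varepsilon\varphi_\varepsilon(ig)$ and the standard relative bound $\|\varphi_\varepsilon(g)(N_\varepsilon+1)^{-1/2}\|\leq C\|g\|_\mathfrak{h}$; the case $\nu(\varepsilon)=1/\varepsilon$ presents no extra difficulty since $[N_\varepsilon,\mathrm{d}\mathcal{G}_\varepsilon(\omega)]=0$. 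With this bound, \Cref{prop:2} applied at every rational time, combined with a diagonal extraction and equicontinuity in $t$ of $\tr_\mathscr{H}(\hat{\Gamma}_\varepsilon(t,\eta)\mathfrak{k})$ (obtained via the same commutator estimates to control the time derivative), yields a single subsequence $\varepsilon_n\to 0$ and a family of state-valued measures $\mathfrak{m}(t)$ such that $\Gamma_{\varepsilon_n}(t)\to\mathfrak{m}(t)$ quasi-classically for all $t\in\R$, with $\mathfrak{m}(0)$ equal to the quasi-classical limit of $\Gamma_{\varepsilon_n}$.

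\emph{Step two: limit transport equation.} For $\eta\in\mathfrak{h}$ and compact $\mathfrak{k}\in\mathfrak{L}^\infty(\mathscr{H})$ I compute $i\partial_t\tr\bigl(\Gamma_\varepsilon(t)(\mathfrak{k}\otimes W_\varepsilon(\eta))\bigr)$ via the commutator with $H_\varepsilon$, splitting it into the three pieces generated by $\mathfrak{S}\otimes 1$, $\nu(\varepsilon)\,\mathrm{d}\mathcal{G}_\varepsilon(\omega)$ and $\mathfrak{s}\otimes\varphi_\varepsilon(g)$. Using the Weyl commutation relations, the free-field piece multiplied by $\nu(\varepsilon)$ produces, in the limit $\varepsilon\nu(\varepsilon)\to\nu$, the free transport $z\mapsto e^{-it\nu\omega}z$ on the classical phase space. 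The interaction piece yields a term in which $\varphi_\varepsilon(g)$, paired with $\Gamma_\varepsilon(t)$ and a compact spin operator, converges quasi-classically to the c-number $2\Re\langle g,z\rangle$ integrated against $\mathrm{d}\mathfrak{m}(t,z)$, by the Wick-quantization limits in \cite{correggi2019arxiv,falconi2017arxiv}; crucially, the non-commutativity of $\mathfrak{s}$ on $\mathscr{H}$ forces one to keep track of the relative ordering of the classical factor and the density $\gamma(t,z)$, which produces a two-sided (conjugation) action in the limit. Passing to $\varepsilon_n\to 0$ (justified by the propagated moment bound) gives a closed, deterministic evolution equation for $\hat{\mathfrak{m}}(t,\eta)$; its unique solution is the Fourier transform of $\mathfrak{U}_{t,0}(z)\gamma(z)\mathfrak{U}_{t,0}^*(z)\,\mathrm{d}\bigl(e^{-it\nu\omega}{}_\star\mu\bigr)(z)$, with $\mathfrak{U}_{t,0}(z)$ the effective spin propagator generated by the time-dependent, $z$-parametrized Hamiltonian $\mathfrak{S}+2\mathfrak{s}\,\Re\langle g,e^{-it\nu\omega}z\rangle$ — precisely the bottom-right corner of the diagram.

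\emph{Main obstacle.} The delicate point is the quasi-classical limit of the interaction commutator: showing that the unbounded operator $\mathfrak{s}\otimes\varphi_\varepsilon(g)\,W_\varepsilon(\eta)$ acts in the limit as multiplication by $2\Re\langle g,z\rangle$ with the correct non-commutative ordering relative to the spin factor $\mathfrak{s}$. It is this ordering that produces the conjugation $\mathfrak{U}_{t,0}\,\gamma\,\mathfrak{U}_{t,0}^*$ rather than a one-sided action on $\gamma$, and it is exactly where the moment hypothesis \eqref{eq:1} and its propagation from Step one are indispensable.
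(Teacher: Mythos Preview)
Your overall strategy---propagate the number bound, extract a common subsequence via equicontinuity, pass to the limit in an integral/Heisenberg equation, and identify the limit as the unique solution of a transport equation---matches the paper's. Two technical choices, however, diverge from the paper and make your route harder than necessary.

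First, you work directly with $\Gamma_\varepsilon(t)$ in the Schr\"odinger picture, so the Heisenberg equation contains the commutator $\nu(\varepsilon)\,[\mathrm{d}\mathcal{G}_\varepsilon(\omega),W_\varepsilon(\eta)]$. When $\nu(\varepsilon)=1/\varepsilon$ this term is $O(1)$ and carries an unbounded field $\varphi_\varepsilon(\omega\eta)$, which forces you to restrict to $\eta\in D(\omega)$ and to control an additional unbounded piece before any limit can be taken. The paper avoids this entirely by passing to the interaction representation $\Upsilon_\varepsilon(t)=e^{itH_\varepsilon^{\rm f}}\Gamma_\varepsilon(t)e^{-itH_\varepsilon^{\rm f}}$: the Duhamel formula for $\hat\Upsilon_\varepsilon(t)$ then involves only $\varphi_\varepsilon(e^{i\varepsilon\nu(\varepsilon)\tau\omega}g)$ with $g\in\mathfrak{h}$, and the free transport $e^{-it\nu\omega}{}_\star\mu$ is recovered \emph{a posteriori} (their Corollary~3.1) rather than extracted from a divergent-looking commutator.

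Second, for the interaction term you invoke the general quasi-classical Wick calculus of \cite{correggi2019arxiv}. The paper points out that for the spin-boson model this heavy machinery is unnecessary: since the interaction is factorized, one can split $\mathfrak{s}(\tau)\mathfrak{k}$ into its four positive parts, trace over $\mathscr{H}$ \emph{first} to obtain a scalar Fock-space state $\zeta_\varepsilon(\tau)=\tr_{\mathscr{H}}(\Upsilon_\varepsilon(\tau)\,\mathfrak{sk}_\bullet)$, and then apply purely \emph{scalar} semiclassical results of Ammari--Nier to $\tr(\zeta_\varepsilon\,\varphi_\varepsilon W_\varepsilon(\eta))$. This is what produces the correct two-sided ordering automatically and is more elementary than what you sketch. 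Finally, note that both your Step~two and the paper's Duhamel argument require $\delta\geq 1/2$; the paper first proves the theorem under that stronger assumption and only afterwards relaxes to arbitrary $\delta>0$ by approximation---a step you should make explicit.
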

\medskip

In the above theorem, the symbol $(\,\cdot \,) _{\star}(\,\cdot \,)$ stands
for the pushforward of the measure on the right by means of the map on the
left, and $\mathfrak{U}_{t,s}(z)$ is the two-parameter unitary group on
$\mathscr{H}$ generated by the self-adjoint, generally {\em time-dependent}
effective Hamiltonian\footnote{$\mathfrak{U}_{t,s}(z)$ is the unique solution
  of $i\partial_t\mathfrak{U}_{t,s}(z) = \mathfrak{H}(z)
  \mathfrak{U}_{t,s}(z)$ and $\mathfrak{U}_{t,t}(z)=\mathbf 1$.}
\begin{equation*}
  \mathfrak{H}(z)= \mathfrak{S} + 2\Re \langle e^{-it\nu\omega}z  , g \rangle_{\mathfrak{h}} \, \mathfrak{s}\;.
\end{equation*}
The operator $\mathfrak{H}(z)$ is a time-dependent generator if $\nu=1$, and
it is time-independent if $\nu=0$\footnote{We restrict our attention only to
  the limits $\nu=1$ and $\nu=0$, since they encode all different and
  well-defined outcomes that one could obtain for the effective dynamics. In
  fact, every choice of $\nu(\varepsilon)$ such that $\lim_{\varepsilon\to
    0}\varepsilon \nu(\varepsilon)= \lambda >0$ would amount in a rescaling
  of the field dispersion relation, while any choice such that either
  $\lim_{\varepsilon\to 0}\varepsilon \nu(\varepsilon)= \lambda = \infty$ or
  such that the limit does not exist would prevent an explicit definition of
  the effective dynamics in the limit $\varepsilon\to 0$}. More precisely,
for $\nu=1$ the classical bosonic field described by $e^{-it\omega}\,
_{\star} \,\mu$ evolves freely, while for $\nu=0$ it does not evolve at all
and is described by $\mu$ at all times; in both cases it drives the spin
state through $\mathfrak{U}_{t,0}(z)$, mediated over all possible
configurations $z$ in the support of $\mu$. Let us also remark that
$\mathfrak{U}_{t,0}(z) \gamma(z) \mathfrak{U}^{*}_{t,0}(z)$ shall be seen as
a Radon-Nikod\'{y}m derivative, and as such the pushforward does not act on
it.  \smallskip

\cref{thm:1} therefore explains how the semiclassical bosonic
subsystem becomes an environment driving the spin system, unaffected by the
latter, if the quasi-classical parameter $\varepsilon$ is small
enough. This also motivates the terminology used so far, {\it i.e.,
  the identification of the spin component as the `small' system, while the
  bosonic field is the `large' environment or reservoir.}  In addition, the
effective dynamics of the spin system can be characterized explicitly, being
unitary and described by $\mathfrak{U}_{t,0}(z)$ for any fixed configuration
$z$ of the classical field, but not unitary (and not even
Markovian\footnote{We plan to investigate the non-Markovian character of the
  quasi-classical effective dynamics in an upcoming paper.}) in general, due
to the integration over all configurations reached by the classical
bosonic state $e^{-it\nu\omega}\, _{\star} \,\mu$. Both a stationary
and a freely evolving bosonic environment can be obtained, tuning the
microscopic initial state accordingly in a way that makes $\nu(\varepsilon)$
either $1$ (stationary) or $\frac{1}{\varepsilon}$ (freely evolving). Let us
stress that even if $\nu(\varepsilon)$ appears in the Hamiltonian, it should
be thought as a feature of the chosen initial state, fixing the scale of
energy for the bosonic subsystem.

\subsection{Loss of mass in the quasi-classical limit}
\label{sec:losing-mass-through}

An interesting feature of quasi-classical systems is that some mass can be
lost in the limit $\varepsilon\to 0$, due to the entanglement between the two
subsystems, when the spin part is infinite dimensional. By loss of
mass we mean that the measure in the quasi-classical limit satisfies
$\mu(\mathfrak h)<1$. There are many well-known examples of loss of mass
(also called loss of compactness) in semiclassical analysis, both in finite
and infinite dimensions \citep[see,
\emph{e.g.},][]{ammari2008ahp,lions1993rmi}. In those cases, however,
conditions like \eqref{eq:1} are enough to guarantee that no mass is lost.

Here, on the contrary, mass can be lost ``through the spin system'',
provided the systems are entangled, and the spin system components could
escape to infinity. In fact, if the microscopic state is unentangled (in a
natural quasi-classical way), \emph{i.e.} it is of the form
\begin{gather*}
  \Gamma_{\varepsilon}= \gamma_0\otimes \xi_{\varepsilon}\;, \\\text{with } \gamma_0\in \mathfrak{L}^1_{+,1}\bigl(\mathscr{H}\bigr)\;,\; \xi_{\varepsilon}\in \mathfrak{L}^1_{+,1}\bigl(\mathcal{G}_{\varepsilon}^{\rm s}(\mathfrak{h})\bigr)\;,
\end{gather*}
the quasi-classical convergence in \cref{prop:2} ``decouples'' and no mass
can be lost: for this class of states \eqref{eq:1} implies
$\mathfrak{m}=(\mu,\gamma_0)$, with $\mu(\mathfrak{h})=1$. Similarly,
if the spin subsystem is finite dimensional or its particles are
confined, again no mass can be lost. More precisely, if either
$\dim(\mathscr{H})<+\infty $ or there exists an operator $\mathfrak{A}>0$ on
$\mathscr{H}$ with compact resolvent\footnote{If $\mathfrak{S}$ has compact
  resolvent (and it is bounded from below), $\mathfrak{A}=\mathfrak{S} +
  \lvert \inf \sigma(\mathfrak{S}) \rvert_{}^{}+1$ would be a natural choice,
  and the associated condition \eqref{eq:2} would mean that mass is not lost
  if one restricts to states with $\varepsilon$-uniformly-bounded Spin
  kinetic energy.} such that there exists $C>0$ with
\begin{equation}
  \label{eq:2}
  \tr\bigl(\Gamma_{\varepsilon}( \mathfrak{A}\otimes 1) \bigr)\leq C\;,
\end{equation}
then the measure $\mathfrak{m}=(\mu,\gamma)$ in \cref{prop:2} is such that
$\mu(\mathfrak{h})=1$.

In general however, part or all of the mass can be lost in the limit
$\varepsilon\to 0$ of a generic quantum state
$\Gamma_{\varepsilon}$. \cref{thm:1} is also interesting if (some) mass is
lost. In fact, \emph{the mass is preserved by the quasi-classical dynamics}:
this means that the same amount of mass is lost at any time, and therefore
that one should check if any mass is lost only at the initial time. We think
that this loss of mass phenomenon peculiar to the quasi-classical
entanglement is worth pointing out, and could be explored further in concrete
applications.

\section{Heuristic Derivation}
\label{sec:an-heur-argum}

If the initial state is quasi-classically unentangled, \emph{i.e.}
\begin{equation*}
  \Gamma_{\varepsilon}=\gamma_0\otimes \xi_{\varepsilon}
\end{equation*}
(see \cref{sec:losing-mass-through} above), it is possible to use the
factorized nature of the spin-boson interaction to formally obtain a result
akin to \cref{thm:1} in a very intuitive way, that hopefully helps to
illustrate the main ideas behind the general proof.


In order to discuss the strategy, let us set some useful notation. Define the
free Hamiltonian
\begin{equation*}
  H_{\varepsilon}^{\rm f}:= H_{\varepsilon}\bigr\rvert_{g=0}= \mathfrak{S}\otimes 1 + \nu(\varepsilon)\, 1\otimes \mathrm{d}\mathcal{G}_{\varepsilon}(\omega)=: H^{\rm fs}+ H_{\varepsilon}^{\rm fb}\;,
\end{equation*}
and define the interaction
\begin{equation*}
  H_{\varepsilon}^{\rm i}= H_{\varepsilon}- H_{\varepsilon}^{\rm f}\;.
\end{equation*}
The Dyson expansion for the evolution in the interaction picture is
\begin{equation*}
  e^{itH_{\varepsilon}^{\rm f}}e^{-it H_{\varepsilon}}= 1+ \sum_{n\in \mathbb{N}_{*}}^{}(-i)^n \int_0^{t}  \mathrm{d}s_1\int_0^{s_1}  \mathrm{d}s_2 \dotsm \int_0^{s_{n-1}}  \mathrm{d}s_n \;\mathfrak{s}_{s_1}\dotsm \mathfrak{s}_{s_n} \otimes \varphi_{\varepsilon,s_1}\dotsm\varphi_{\varepsilon,s_n}\;,
\end{equation*}
where
\begin{gather*}
  \mathfrak{s}_s= e^{is H^{\rm fs}}\mathfrak{s} e^{-isH^{\rm fs}}\;,\\
  \varphi_{\varepsilon,s}= e^{is H^{\rm fb}_{\varepsilon}}\varphi_{\varepsilon}(g)e^{-is H^{\rm fb}_{\varepsilon}}=\varphi_{\varepsilon}(e^{i\nu(\varepsilon)s \omega }g)\;;
\end{gather*}
and in addition
\begin{equation*}
  e^{it H_{\varepsilon}}e^{-itH_{\varepsilon}^{\rm f}}= 1+ \sum_{m\in \mathbb{N}_{*}}^{}i^m \int_0^{t}  \mathrm{d}u_1\int_0^{u_1}  \mathrm{d}u_2 \dotsm \int_0^{u_{m-1}}  \mathrm{d}u_m \;\mathfrak{s}_{u_m}\dotsm \mathfrak{s}_{u_1} \otimes \varphi_{\varepsilon,u_m}\dotsm\varphi_{\varepsilon, u_1}\;.
\end{equation*}
It follows that
\begin{multline*}
  \tilde{\gamma}_\varepsilon(t):=\tr_{\mathcal{G}_{\varepsilon}^{\rm s}(\mathfrak{h})}\Bigl(e^{it H^{\rm fs}}e^{-it H_{\varepsilon}}(\gamma_0\otimes \xi_{\varepsilon})e^{it H_{\varepsilon}}e^{-it H^{\rm fs}}\Bigr)\\=\sum_{m,n\in \mathbb{N}}^{}i^{m-n} \int_0^{t}  \mathrm{d}s_1 \dotsm \int_0^{s_{n-1}} \mspace{-17mu} \mathrm{d}s_n\int_0^{t}  \mathrm{d}u_1 \dotsm \int_0^{u_{m-1}} \mspace{-17mu} \mathrm{d}u_m \;\mathfrak{s}_{s_1}\dotsm \mathfrak{s}_{s_n}\,\gamma_0\, \mathfrak{s}_{u_m}\dotsm \mathfrak{s}_{u_1} \\\tr_{\mathcal{G}_{\varepsilon}^{\rm s}(\mathfrak{h})}\Bigl( \xi_{\varepsilon}\varphi_{\varepsilon,u_m}\dotsm\varphi_{\varepsilon,u_1}\varphi_{\varepsilon,s_1}\dotsm\varphi_{\varepsilon,s_n}\Bigr)\;.
\end{multline*}
Now, in order to take the limit $\varepsilon\to 0$, one should focus on the
expectation with respect to $\xi_{\varepsilon}$:
\begin{equation*}
  \langle \varphi_{\varepsilon,u_m}\dotsm\varphi_{\varepsilon,u_1}\varphi_{\varepsilon,s_1}\dotsm\varphi_{\varepsilon,s_n}  \rangle_{\xi_{\varepsilon}}:= \tr_{\mathcal{G}_{\varepsilon}^{\rm s}(\mathfrak{h})}\Bigl( \xi_{\varepsilon}\varphi_{\varepsilon,u_m}\dotsm\varphi_{\varepsilon,u_1}\varphi_{\varepsilon,s_1}\dotsm\varphi_{\varepsilon,s_n}\Bigr)\;.
\end{equation*}
It is possible to write such an expectation as follows, where
$f_1,\ldots,f_k\in\mathfrak h$,
\begin{multline*}
  \langle \varphi_{\varepsilon}(f_1)\dotsm\varphi_{\varepsilon}(f_k)  \rangle_{\xi_{\varepsilon}}= (-i)^k\partial_{\lambda_1}\dotsm \partial_{\lambda_k} \Bigl(\langle W_{\varepsilon}(\lambda_1f_1)\dotsm W_{\varepsilon}(\lambda_kf_k)  \rangle_{\xi_{\varepsilon}}\Bigr)\biggr\rvert_{\lambda_1=\dotsm=\lambda_k=0}\\=: D_k\langle W_{\varepsilon}(\lambda_1f_1)\dotsm W_{\varepsilon}(\lambda_kf_k)  \rangle_{\xi_{\varepsilon}}\biggr\rvert_{\underline{\lambda}=0} \; .
\end{multline*}
It then follows from the Weyl CCR
\begin{equation*}
  W_{\varepsilon}(\lambda_1f_1)W_{\varepsilon}(\lambda_2f_2)= e^{-i\varepsilon\Im \langle \lambda_1f_1  , \lambda_2f_2 \rangle_{}}W_{\varepsilon}(\lambda_1f_1+\lambda_2f_2)
\end{equation*}
that
\begin{equation*}
  \lim_{\varepsilon\to 0}D_k\langle W_{\varepsilon}(\lambda_1f_1)\dotsm W_{\varepsilon}(\lambda_kf_k)  \rangle_{\xi_{\varepsilon}}\biggr\rvert_{\underline{\lambda}=0}= \lim_{\varepsilon\to 0}D_k\langle W_{\varepsilon}(\lambda_1f_1+\dotsm+\lambda_kf_k)  \rangle_{\xi_{\varepsilon}}\biggr\rvert_{\underline{\lambda}=0}\;.
\end{equation*}
Now, in this formal reasoning we feel free to exchange $\lim_{\varepsilon\to
  0}$ with $D_k$; thus we obtain, provided that\footnote{The scalar
  convergence $\xi_{\varepsilon}\underset{\varepsilon\to 0}{\longrightarrow}
  \mu$ is perfectly analogous to the quasi-classical one, and could be seen
  as a particular case of it where the additional degrees of freedom are
  trivial. Let us remark again that for the scalar case -- and thus also for
  the unentangled quasi-classical states considered here -- \eqref{eq:1} is
  sufficient to guarantee that $\mu(\mathfrak{h})=1$.}
$\xi_{\varepsilon}\underset{\varepsilon\to 0}{\longrightarrow} \mu$,
\begin{equation*}
  D_k\lim_{\varepsilon\to 0}\langle W_{\varepsilon}(\lambda_1f_1+\dotsm+\lambda_kf_k)  \rangle_{\xi_{\varepsilon}}\biggr\rvert_{\underline{\lambda}=0}= D_k \hat{\mu}(\lambda_1f_1+\dotsm \lambda_kf_k)\biggr\rvert_{\underline{\lambda}=0}= \int_{\mathfrak{h}}^{}\alpha_{f_1}(z)\dotsm \alpha_{f_k}(z)  \mathrm{d}\mu(z)\;;
\end{equation*}
where
\begin{equation*}
  \alpha_f(z):= 2\Re \langle z  , f \rangle_{\mathfrak{h}}\;.
\end{equation*}
Applying these results to $\lim_{\varepsilon\to
  0}\tilde{\gamma}_{\varepsilon}(t)$ yields:
\begin{multline*}
  \lim_{\varepsilon\to 0}\tilde{\gamma}_\varepsilon(t)=\sum_{m,n\in \mathbb{N}}^{}i^{m-n} \int_0^{t}  \mathrm{d}s_1 \dotsm \int_0^{s_{n-1}} \mspace{-17mu} \mathrm{d}s_n\int_0^{t}  \mathrm{d}u_1 \dotsm \int_0^{u_{m-1}} \mspace{-17mu} \mathrm{d}u_m \;\mathfrak{s}_{s_1}\dotsm \mathfrak{s}_{s_n}\,\gamma\, \mathfrak{s}_{u_m}\dotsm \mathfrak{s}_{u_1} \\\int_{\mathfrak{h}}^{}\alpha_{s_1}(z)\dotsm \alpha_{s_n}(z)\alpha_{u_1}(z)\dotsm\alpha_{u_m}(z)  \mathrm{d}\mu(z)\;,
\end{multline*}
with
\begin{equation}
  \label{eq:4}
  \alpha_s(z)=2\Re \langle z  , e^{i\nu s \omega}g \rangle_{\mathfrak{h}}\;.
\end{equation}
We have thus
\begin{multline}
  \gamma(t):=\lim_{\varepsilon\to 0}\gamma_{\varepsilon}(t):= \lim_{\varepsilon\to 0}\tr_{\mathcal{G}_{\varepsilon}^{\rm s}(\mathfrak{h})}\Bigl(e^{-it H_{\varepsilon}}(\gamma_0\otimes \xi_{\varepsilon})e^{it H_{\varepsilon}}\Bigr)\\
  =\int_{\mathfrak{h}}^{} \mathfrak{U}_{t,0}(z)\gamma(z)\mathfrak{U}^{*}_{t,0}(z)  \mathrm{d}\bigl(e^{-it\nu \omega}\,_{\star}\, \mu\bigr)(z)\;,
  \label{eq:4.1}
\end{multline}
where $\mathfrak{U}_{t,0}(z)$ is defined in \cref{thm:1}, that can also
formally be seen as
\begin{equation*}
  \mathfrak{U}_{t,0}(z)= e^{-itH^{\rm fs}}\sum_{n\in \mathbb{N}}^{}(-i)^n\int_0^{t}  \mathrm{d}s_1 \dotsm \int_0^{s_{n-1}} \mspace{-17mu} \mathrm{d}s_n\;\mathfrak{s}_{s_1}\alpha_{s_1}(z) \dotsm \mathfrak{s}_{s_n}\alpha_{s_n}(z)\;.
\end{equation*}
One can see last equality in \eqref{eq:4.1} as a `resummation of the Dyson
series'.

\section{Proof of \cref{thm:1}}
\label{sec:proof-cref}

As we have seen in \cref{sec:an-heur-argum}, the factorized structure of the
spin-boson interaction can be used to simplify the study of the
quasi-classical limit, compared to, say, the Nelson, polaron, or Pauli-Fierz
models, where such a factorization is not present \citep[see][for their
quasi-classical analysis]{correggi2019arxiv,correggi2020arxiv}. The proof of
\cref{thm:1} reflects this as well, as illustrated below.  Since the proof
follows closely \citep{correggi2019arxiv} -- and directly utilizes some of
its results -- we will mostly focus on highlighting the features specific to
the spin-boson model.

The proof is organized in a few steps, namely:
\begin{itemize}
\item write a Duhamel-type formula for the Fourier transform of evolved
  quantum states in the interaction representation;
  
\item extract a subsequence $\varepsilon_{n_k}$ of common quasi-classical
  convergence for regular enough evolved states at any given time;
  
\item take the limit $\varepsilon_{n_k}\to 0$ along the aforementioned
  subsequence of the Duhamel formula;
  
\item study the resulting transport equation to identify the evolved measure,
  and uniqueness of the limit;
  
\item relax the regularity assumption needed at step two to the assumption in
  the theorem.
\end{itemize}
We will review these steps below separately.

\subsection{The Duhamel Formula}
\label{sec:duhamel-formula}

For technical reasons, related mostly to the possible unboundedness of
$\omega$, it is convenient to pass to the so-called interaction
representation. Let us define the evolution in the interaction representation
as
\begin{equation*}
  \Upsilon_{\varepsilon}(t):= e^{it H_{\varepsilon}^{\rm f}}\Gamma_{\varepsilon}(t)e^{-it H_{\varepsilon}^{\rm f}}\;.
\end{equation*}

The Schrödinger differential equation of quantum evolution requires too much
regularity for its solutions; it is more convenient to use its integral (or
Duhamel) form. To write it, it is sufficient to suppose that for all $t\in
\mathbb{R}$,
$\tr\Bigl(\Upsilon_{\varepsilon}(t)(\mathrm{d}\mathcal{G}_{\varepsilon}(1)+1)^{1/2}\Bigr)<+\infty$. Under
this assumption the Fourier transform $\hat{\Upsilon}_{\varepsilon}(t)$
satisfies the following integral equation, weakly on
$\mathfrak{L}^1(\mathscr{H})$, for any $t,s\in \mathbb{R}$ and $\eta\in
\mathfrak{h}$:
\begin{multline}
  \label{eq:3}
  [\hat{\Upsilon}_{\varepsilon}(t)](\eta)- [\hat{\Upsilon}_{\varepsilon}(s)](\eta)=  -i \int_s^t \tr_{\mathcal{G}_{\varepsilon}^{\rm s}(\mathfrak{h})}\Bigl(\Bigl[\mathfrak{s}(\tau) \otimes \varphi_{\varepsilon}(\tau), \Upsilon_{\varepsilon}(\tau)\Bigr]W_{\varepsilon}(\eta)\Bigr)  \mathrm{d}\tau\\=i \int_s^t \biggl(\tr_{\mathcal{G}_{\varepsilon}^{\rm s}(\mathfrak{h})}\Bigl(  \Upsilon_{\varepsilon}(\tau)\varphi_{\varepsilon}(\tau)W_{\varepsilon}(\eta)\Bigr)\mathfrak{s}(\tau)-\mathfrak{s}(\tau)\tr_{\mathcal{G}_{\varepsilon}^{\rm s}(\mathfrak{h})}\Bigl( \varphi_{\varepsilon}(\tau) \Upsilon_{\varepsilon}(\tau)W_{\varepsilon}(\eta)\Bigr)\biggr)  \mathrm{d}\tau\;.
\end{multline}
Here, we write
$$
\mathfrak s(\tau)=e^{i\tau H^{\rm fs}}\mathfrak s e^{-i\tau H^{\rm fs}}\quad
\mbox{ and}\quad \varphi_\varepsilon(\tau) = e^{i\tau H_\varepsilon^{\rm
    fb}}\varphi_\varepsilon(g)e^{-i\tau H_\varepsilon^{\rm fb}}.
$$
Once the required regularity is taken care of, this equation follows directly
from the algebraic properties of the quantum evolution (in interaction
representation) $e^{it H_{\varepsilon}^{\rm f}}e^{-it H_{\varepsilon}}$, as
already outlined in \cref{sec:an-heur-argum}. The Duhamel formula is the
starting point for our study of the dynamical quasi-classical limit.

The regularity bound concerning the average of the number operator at all
times that we used above -- especially in its form that is uniform
w.r.t. $\varepsilon\in (0,1)$ -- will be crucial also in what follows, so let
us formulate it as an auxiliary ``black box'' result. Such propagation
results are typically heavily dependent on the model under consideration; for
the Spin-Boson system one could adapt very easily the results available for
the Nelson model with ultraviolet cutoff \citep[][Proposition
4.2]{falconi2013jmp}, obtaining the lemma below.
\begin{lemma}
  \label{lemma:1}
  For any $\delta,C>0$ and for all $t\in \mathbb{R}$, there exists
  $K(\delta,C,t)>0$ such that
  \begin{multline*}
    \tr\Bigl(\Gamma_{\varepsilon}\bigl(\mathrm{d}\mathcal{G}_{\varepsilon}(1)+1\bigr)^{\delta}\Bigr)\leq C\; \Longrightarrow \; \biggl(\;\tr\Bigl(\Gamma_{\varepsilon}(t)\bigl(\mathrm{d}\mathcal{G}_{\varepsilon}(1)+1\bigr)^{\delta}\Bigr)\leq K(\delta,C,t)\\\land  \; \tr\Bigl(\Upsilon_{\varepsilon}(t)\bigl(\mathrm{d}\mathcal{G}_{\varepsilon}(1)+1\bigr)^{\delta}\Bigr)\leq K(\delta,C,t)\; \biggr)\;.
  \end{multline*}
\end{lemma}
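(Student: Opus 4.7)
The plan is to prove the first estimate via a Grönwall argument on moments of the rescaled number operator $N_\varepsilon:=\mathrm{d}\mathcal{G}_\varepsilon(1)$, and to obtain the second for free. Indeed, both $\mathfrak{S}\otimes 1$ and $\nu(\varepsilon)\,1\otimes\mathrm{d}\mathcal{G}_\varepsilon(\omega)$ commute with $1\otimes N_\varepsilon$ (the former trivially, the latter because second quantization preserves commutators), so $e^{itH_\varepsilon^{\rm f}}$ commutes with $(N_\varepsilon+1)^{\delta}$, yielding $\tr(\Upsilon_\varepsilon(t)(N_\varepsilon+1)^{\delta})=\tr(\Gamma_\varepsilon(t)(N_\varepsilon+1)^{\delta})$. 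Thus only the Schrödinger-picture bound needs attention. Setting $f_\varepsilon(t):=\tr(\Gamma_\varepsilon(t)(N_\varepsilon+1)^{\delta})$, the same commutation fact eliminates the free part from the time derivative, giving
$$\tfrac{d}{dt}f_\varepsilon(t)=i\,\tr\bigl(\Gamma_\varepsilon(t)\,\bigl[\mathfrak{s}\otimes\varphi_\varepsilon(g),(N_\varepsilon+1)^{\delta}\bigr]\bigr).$$

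The key technical input is an $\varepsilon$-uniform commutator estimate of the form $\pm i\bigl[\mathfrak{s}\otimes\varphi_\varepsilon(g),(N_\varepsilon+1)^{\delta}\bigr]\leq C_\delta\|\mathfrak{s}\|\|g\|_{\mathfrak h}(N_\varepsilon+1)^{\delta}$ for $\varepsilon\in(0,1)$, from which one infers the linear inequality $|\tfrac{d}{dt}f_\varepsilon(t)|\leq C_\delta\|\mathfrak{s}\|\|g\|_{\mathfrak h}f_\varepsilon(t)$ and hence $f_\varepsilon(t)\leq C\exp(C_\delta\|\mathfrak{s}\|\|g\|_{\mathfrak h}|t|)=:K(\delta,C,t)$. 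To establish the commutator estimate, I would combine the rescaled Nelson bound $\|a_\varepsilon^{\#}(g)(N_\varepsilon+\varepsilon)^{-1/2}\|\leq\|g\|_{\mathfrak h}$ (which follows from $a_\varepsilon^{*}(g)a_\varepsilon(g)\leq\|g\|_{\mathfrak h}^{2}N_\varepsilon$) with the pull-through identity $a_\varepsilon^{\#}(g)F(N_\varepsilon)=F(N_\varepsilon\pm\varepsilon)\,a_\varepsilon^{\#}(g)$; applied to $F(x)=(x+1)^{\delta}$, this produces commutators of order $\varepsilon(N_\varepsilon+1)^{\delta-1}a_\varepsilon^{\#}(g)$, whose $\varepsilon$-factors cancel the implicit $\varepsilon^{-1/2}$ in $a_\varepsilon^{\#}(g)$ and yield the uniform bound. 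For non-integer $\delta$ one either interpolates between nearby integer exponents or uses the integral representation
$$(N_\varepsilon+1)^{\delta}=\tfrac{\sin\pi\delta}{\pi}\int_0^{\infty}\!\lambda^{\delta-1}\frac{N_\varepsilon+1}{\lambda+N_\varepsilon+1}\,d\lambda,\qquad 0<\delta<1,$$
reducing everything to commutators with bounded resolvents of $N_\varepsilon$.

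To render the formal time differentiation rigorous, one first works with the bounded approximant $(N_\varepsilon+1)^{\delta}(1+\alpha N_\varepsilon)^{-\delta}$, $\alpha>0$, on the essentially self-adjoint core furnished by \cref{prop:1}, derives the $\alpha$-uniform bound as above, and then sends $\alpha\to 0^{+}$ via monotone convergence (Fatou for traces). This mirrors the template of \citep[Prop.~4.2]{falconi2013jmp} for the Nelson model with ultraviolet cutoff, which the factorized spin-boson interaction $\mathfrak{s}\otimes\varphi_\varepsilon(g)$ only simplifies. The main obstacle is the $\varepsilon$-uniform bookkeeping in the commutator estimate: one has to verify that the $\varepsilon$-powers from the Wick rule $[N_\varepsilon,a_\varepsilon^{\#}(g)]=\pm\varepsilon a_\varepsilon^{\#}(g)$ match those hidden in the rescaled creation and annihilation bounds, so that $C_\delta$ does not blow up as $\varepsilon\to 0$. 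This is the only point where the quasi-classical rescaling genuinely enters; once settled, the remainder is a routine Grönwall argument paralleling the semiclassical case.
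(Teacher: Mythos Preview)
Your approach is exactly what the paper points to: the paper does not give a self-contained proof of this lemma but simply states that one ``could adapt very easily the results available for the Nelson model with ultraviolet cutoff \citep[][Proposition 4.2]{falconi2013jmp}'', which is precisely the template you invoke and flesh out via the Gr\"onwall/pull-through/commutator argument. One cosmetic remark: in the paper's scaling $a_\varepsilon^{\#}=\sqrt{\varepsilon}\,a^{\#}$ carries a \emph{positive} power of $\varepsilon$, so there is no ``implicit $\varepsilon^{-1/2}$'' to cancel --- the commutator with $(N_\varepsilon+1)^{\delta}$ comes out $O(\varepsilon)$, which only makes the uniform bound easier.
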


\subsection{Common subsequence extraction at all times}
\label{sec:comm-subs-extr}

Thanks to the propagation lemma, \cref{lemma:1}, it is possible to prove that
$t\mapsto \hat{\Upsilon}_{\varepsilon}(t)$ is \emph{uniformly equicontinuous
  w.r.t.\ $\varepsilon\in (0,1)$}. This in turn implies, by a diagonal
extraction argument, that starting from any sequence $\varepsilon_n\to 0$, it
is possible to extract a subsequence $\varepsilon_{n_k}\to 0$ that guarantees
convergence of $\Upsilon_{\varepsilon_{n_k}}(\tau)$ to some state-valued
measure $\mathfrak{n}_\tau$ for any $\tau$ in a given compact interval
$[s,t]$ (actually for any given time). This is the crucial ingredient
allowing to study the limit $\varepsilon\to 0$ of the Duhamel formula
\eqref{eq:3}, for the latter involves the integral over all evolved states
between $s$ and $t$. The result reads as follows, and it has been proved in
\citep[][Propositions 4.2 and 4.3]{correggi2019arxiv}, with a general
argument that does not depend on the nature of $\mathscr{H}$ or on the
Hamiltonian (one only requires that a form of \cref{lemma:1} is available).

\begin{proposition}
  \label{prop:3}
  Let $\Gamma_{\varepsilon}$ be such that
  \begin{equation*}
    \tr\Bigl(\Gamma_{\varepsilon}\bigl(\mathrm{d}\mathcal{G}_{\varepsilon}(1)+1\bigr)^{1/2}\Bigr)\leq C\;.
  \end{equation*}
  Then $\mathbb{R}\times \mathfrak{h}\ni
  (t,\eta)\mapsto[\hat{\Upsilon}_{\varepsilon}(t)](\eta)\in
  \mathfrak{L}^1(\mathscr{H})$ is uniformly equicontinuous w.r.t.\
  $\varepsilon\in (0,1)$ on bounded subsets of $\mathbb{R}\times
  \mathfrak{h}$, having endowed $\mathfrak{L}^1(\mathscr{H})$ with the weak-*
  topology.

  In addition, for any sequence $\varepsilon_n\to 0$, there exists a
  subsequence $\varepsilon_{n_k}\to 0$ and a family $\{\mathfrak{n}_t\}_{t\in
    \mathbb{R}}$ of state-valued measures such that for all $t\in
  \mathbb{R}$,
  \begin{equation*}
    \Upsilon_{\varepsilon_{n_k}}(t)\underset{k\to \infty}{\longrightarrow} \mathfrak{n}_t\;.
  \end{equation*}
\end{proposition}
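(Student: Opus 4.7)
The plan is to combine the Duhamel formula \eqref{eq:3} with the uniform number bound provided by \cref{lemma:1} to derive the joint uniform equicontinuity of $(t,\eta)\mapsto [\hat{\Upsilon}_{\varepsilon}(t)](\eta)$, and then to perform a Cantor-style diagonal extraction producing a single subsequence that works for every $t\in \mathbb{R}$. For equicontinuity in $t$, I would test \eqref{eq:3} against an arbitrary compact operator $\mathfrak{k}\in \mathfrak{L}^{\infty}(\mathscr{H})$ and use (i) unitarity of $W_{\varepsilon}(\eta)$, (ii) the bound $\|\mathfrak{s}(\tau)\|=\|\mathfrak{s}\|$, and (iii) the $\varepsilon$-uniform relative field estimate $\|\varphi_{\varepsilon}(\tau)\psi\|\leq C\|g\|\,\|(\mathrm{d}\mathcal{G}_{\varepsilon}(1)+1)^{1/2}\psi\|$, valid for all $\varepsilon\in(0,1)$ and $\tau\in\mathbb{R}$ since $\|e^{i\nu(\varepsilon)\tau\omega}g\|=\|g\|$. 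Combined with Cauchy--Schwarz in the trace, this yields an estimate of the form
\begin{equation*}
\Bigl|\tr_{\mathscr{H}}\bigl(\bigl([\hat{\Upsilon}_{\varepsilon}(t)]-[\hat{\Upsilon}_{\varepsilon}(s)]\bigr)(\eta)\,\mathfrak{k}\bigr)\Bigr|\leq C_{\mathfrak{s},g,\mathfrak{k}}\int_s^t \Phi_{\varepsilon}(\tau)\,\mathrm{d}\tau,
\end{equation*}
where $\Phi_{\varepsilon}(\tau)$ is controlled by a positive power of $\tr\bigl(\Upsilon_{\varepsilon}(\tau)(\mathrm{d}\mathcal{G}_{\varepsilon}(1)+1)^{1/2}\bigr)$, itself bounded locally uniformly in $\tau$ and uniformly in $\varepsilon\in (0,1)$ by \cref{lemma:1}. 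The result is an $|t-s|$-estimate uniform in $\varepsilon$ and in $\eta$ on bounded subsets of $\mathfrak{h}$.

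For equicontinuity in $\eta$, the key identity is
\begin{equation*}
W_{\varepsilon}(\eta_1)-W_{\varepsilon}(\eta_2)=W_{\varepsilon}(\eta_2)\bigl(e^{-i\varepsilon\Im\langle \eta_2,\eta_1-\eta_2\rangle}W_{\varepsilon}(\eta_1-\eta_2)-1\bigr),
\end{equation*}
where I used $\Im\langle \eta_2,\eta_1\rangle=\Im\langle \eta_2,\eta_1-\eta_2\rangle$. Combined with the Lipschitz bound $\|(W_{\varepsilon}(\zeta)-1)\psi\|\leq \|\zeta\|\,\|(\mathrm{d}\mathcal{G}_{\varepsilon}(1)+1)^{1/2}\psi\|$ (obtained by integrating $\partial_{\lambda}W_{\varepsilon}(\lambda\zeta)\psi=i\varphi_{\varepsilon}(\zeta)W_{\varepsilon}(\lambda\zeta)\psi$ on $\lambda\in[0,1]$ and invoking the field bound), this shows that both the CCR phase and the Weyl difference contribute $O(\|\eta_1-\eta_2\|)$ uniformly in $\varepsilon\in(0,1)$ on bounded subsets. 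Testing against $\mathfrak{k}$ and invoking \cref{lemma:1} then yields a Lipschitz-in-$\eta$ estimate uniform in $\varepsilon$. The principal technical obstacle in these two steps is precisely to ensure that all relative bounds on $\varphi_{\varepsilon}$ and $W_{\varepsilon}$ stay $\varepsilon$-independent, so that the $\varepsilon$-dependent CCR phase does not spoil the equicontinuity; this is exactly what the quasi-classical scaling of $a_{\varepsilon}^{\sharp}$ delivers.

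For the subsequence extraction, I would fix a countable dense subset $\{t_j\}_{j\in\mathbb{N}}\subset\mathbb{R}$. At each $t_j$, \cref{lemma:1} provides hypothesis \eqref{eq:1} for $\Upsilon_{\varepsilon}(t_j)$, so \cref{prop:2} gives a subsequence of $\varepsilon_n$ along which $\Upsilon_{\varepsilon}(t_j)$ converges quasi-classically to a state-valued measure $\mathfrak{n}_{t_j}$. A standard Cantor diagonal argument then produces a single subsequence $\varepsilon_{n_k}\to 0$ with simultaneous convergence at every $t_j$. The uniform equicontinuity in $t$ established above extends the convergence to every $t\in \mathbb{R}$ by a $3\varepsilon$-type argument: for any $t$, picking $t_j$ close to $t$ makes $[\hat{\Upsilon}_{\varepsilon_{n_k}}(t)](\eta)-[\hat{\Upsilon}_{\varepsilon_{n_k}}(t_j)](\eta)$ small uniformly in $k$, so the weak-* limit at $t$ exists. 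Finally, the limiting function $\eta\mapsto\hat{\mathfrak{n}}_t(\eta)$ inherits normalization at $0$, weak-* continuity on finite-dimensional subspaces of $\mathfrak{h}$, and complete positive-definiteness from the corresponding properties of $\hat{\Upsilon}_{\varepsilon_{n_k}}(t)$, so by the infinite-dimensional Bochner theorem invoked in the proof of \cref{prop:2} it is the Fourier transform of a state-valued measure $\mathfrak{n}_t$.
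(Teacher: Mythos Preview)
Your proposal is correct and follows precisely the strategy the paper alludes to: the paper does not give a self-contained proof of \cref{prop:3} but simply cites \citep[][Propositions 4.2 and 4.3]{correggi2019arxiv}, noting only that the argument is general and requires nothing beyond the propagation estimate of \cref{lemma:1}; your Duhamel-based equicontinuity in $t$, Weyl-difference equicontinuity in $\eta$, and Cantor diagonal extraction via \cref{prop:2} constitute exactly that general argument. One small imprecision: you write that $\hat{\mathfrak{n}}_t$ ``inherits normalization at $0$'', but as the paper stresses in the proof sketch of \cref{prop:2} and in \cref{sec:losing-mass-through}, the mass $\tr_{\mathscr{H}}\hat{\mathfrak{n}}_t(0)=\mu_{\mathfrak{n}_t}(\mathfrak{h})$ need not equal $1$ in the quasi-classical setting---this does not affect your argument, since the Bochner-type theorem still yields a (possibly sub-normalized) state-valued measure.
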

As a byproduct (again this is a general result concerning unitary evolutions
generated by operators of the type
$\nu(\varepsilon)\mathrm{d}\mathcal{G}_{\varepsilon}(\cdot)$), we also get
the following information on the limit of the ``true'' evolution
$\Gamma_{\varepsilon}(t)$. Remember that we defined $\nu=\lim_{\varepsilon\to
  0} \varepsilon\nu(\varepsilon)$.
\begin{corollary}
  \label{cor:1}
  Under the same assumptions as in \cref{prop:3}, and given the subsequence
  $\varepsilon_{n_k}\to 0$ and measures $\{\mathfrak{n}_t\}_{t\in
    \mathbb{R}}$, we have that for any $t\in \mathbb{R}$,
  \begin{equation*}
    \Gamma_{\varepsilon_{n_k}}(t)\underset{k\to \infty}{\longrightarrow} \mathfrak{m}_t= e^{-it H^{\rm fs}}\bigl(e^{-it\nu\omega}\,_{\star}\, \mathfrak{n}_t\bigr)e^{it H^{\rm fs}} \;.
  \end{equation*}
\end{corollary}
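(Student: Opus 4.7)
The plan is to reduce the statement to \cref{prop:3} by relating $\Gamma_\varepsilon(t)$ to $\Upsilon_\varepsilon(t)$ through the conjugation $\Gamma_\varepsilon(t)= e^{-itH_\varepsilon^{\rm f}}\Upsilon_\varepsilon(t)e^{itH_\varepsilon^{\rm f}}$. Since $H_\varepsilon^{\rm f}=H^{\rm fs}\otimes 1+1\otimes H_\varepsilon^{\rm fb}$ splits as a sum of commuting operators acting on the two factors, the factor $e^{\pm itH^{\rm fs}}$ may be pulled out of the partial trace over $\mathcal{G}_\varepsilon^{\rm s}(\mathfrak{h})$, while $e^{\pm itH_\varepsilon^{\rm fb}}$ can be moved across the same partial trace by cyclicity. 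What remains is the effect of conjugation of the Weyl operator by $e^{itH_\varepsilon^{\rm fb}}$. From the $\varepsilon$-scaled commutator $[\mathrm{d}\mathcal{G}_\varepsilon(\omega),a^*_\varepsilon(f)]=\varepsilon\, a^*_\varepsilon(\omega f)$ one derives the Bogoliubov-type identity
\begin{equation*}
 e^{itH_\varepsilon^{\rm fb}}\,W_\varepsilon(\eta)\,e^{-itH_\varepsilon^{\rm fb}} = W_\varepsilon\bigl(e^{it\varepsilon\nu(\varepsilon)\omega}\eta\bigr)\;,
\end{equation*}
and thus the key identity
\begin{equation*}
 [\hat{\Gamma}_\varepsilon(t)](\eta) = e^{-itH^{\rm fs}}\,[\hat{\Upsilon}_\varepsilon(t)]\bigl(e^{it\varepsilon\nu(\varepsilon)\omega}\eta\bigr)\,e^{itH^{\rm fs}}\;,
\end{equation*}
valid for every $\varepsilon\in(0,1)$ and every $\eta\in\mathfrak{h}$. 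Notice that this is precisely the place where the dispersion $\omega$ enters the final answer, and where the scaling $\varepsilon\nu(\varepsilon)\to\nu$ becomes the natural one.

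The core of the argument is then the passage to the limit of the right-hand side along the subsequence $\varepsilon_{n_k}$. By assumption $\varepsilon_{n_k}\nu(\varepsilon_{n_k})\to\nu$, so by the spectral theorem and dominated convergence one has $e^{it\varepsilon_{n_k}\nu(\varepsilon_{n_k})\omega}\eta\to e^{it\nu\omega}\eta$ in the norm topology of $\mathfrak{h}$. The subtlety is that we are evaluating $\hat{\Upsilon}_{\varepsilon_{n_k}}(t)$ at a moving argument while it is simultaneously converging weak-$*$ pointwise to $\hat{\mathfrak{n}}_t$; this is handled by the splitting
\begin{equation*}
 [\hat{\Upsilon}_{\varepsilon_{n_k}}(t)](\eta_k)-\hat{\mathfrak{n}}_t(\eta_\infty) = \Bigl([\hat{\Upsilon}_{\varepsilon_{n_k}}(t)](\eta_k)-[\hat{\Upsilon}_{\varepsilon_{n_k}}(t)](\eta_\infty)\Bigr)+\Bigl([\hat{\Upsilon}_{\varepsilon_{n_k}}(t)](\eta_\infty)-\hat{\mathfrak{n}}_t(\eta_\infty)\Bigr),
\end{equation*}
with $\eta_k:=e^{it\varepsilon_{n_k}\nu(\varepsilon_{n_k})\omega}\eta$ and $\eta_\infty:=e^{it\nu\omega}\eta$. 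The second bracket tends to $0$ in the weak-$*$ topology by \cref{prop:3}. The first bracket is controlled by the uniform-in-$\varepsilon$ equicontinuity of $\hat{\Upsilon}_\varepsilon(t)$ on bounded subsets of $\mathfrak{h}$, again granted by \cref{prop:3}: indeed the vectors $\eta_k$ all lie in the closed ball of radius $\|\eta\|_{\mathfrak{h}}$ and converge in norm to $\eta_\infty$. This joint convergence is the one point of genuine substance in the argument, and it has been entirely absorbed into \cref{prop:3}.

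To conclude, one must match the limit with $\hat{\mathfrak{m}}_t(\eta)$. A direct computation using the definition of the pushforward of a state-valued measure and the unitarity of $e^{-it\nu\omega}$ (so that $\langle\eta,e^{-it\nu\omega}z\rangle_{\mathfrak{h}}=\langle e^{it\nu\omega}\eta,z\rangle_{\mathfrak{h}}$) gives
\begin{equation*}
 \hat{\mathfrak{m}}_t(\eta) = e^{-itH^{\rm fs}}\,\hat{\mathfrak{n}}_t\bigl(e^{it\nu\omega}\eta\bigr)\,e^{itH^{\rm fs}}\;,
\end{equation*}
which is exactly the limit of $[\hat{\Gamma}_{\varepsilon_{n_k}}(t)](\eta)$ obtained above. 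By the very definition of quasi-classical convergence this proves $\Gamma_{\varepsilon_{n_k}}(t)\to\mathfrak{m}_t$. The only delicate ingredient is the joint limit in the second paragraph; everything else is routine algebra within the CCR and elementary functional calculus.
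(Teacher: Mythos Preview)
Your argument is correct and is exactly the natural one. The paper does not actually give a proof of \cref{cor:1}: it simply states the result as a ``byproduct'' and a ``general result concerning unitary evolutions generated by operators of the type $\nu(\varepsilon)\mathrm{d}\mathcal{G}_{\varepsilon}(\cdot\,)$'', deferring the details to \citep[][Propositions 4.2 and 4.3]{correggi2019arxiv}. What you have written is precisely the expected filling-in of those details: the Bogoliubov transport of Weyl operators under the free field evolution (yielding the argument shift $\eta\mapsto e^{it\varepsilon\nu(\varepsilon)\omega}\eta$), followed by the joint-limit argument that combines pointwise weak-$*$ convergence with the uniform equicontinuity supplied by \cref{prop:3}, and finally the identification of the Fourier transform of the pushforward measure. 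Nothing is missing.

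One minor remark: your computation of the phase $e^{it\varepsilon\nu(\varepsilon)\omega}$ from the $\varepsilon$-scaled CCR is the correct one (and this is what makes $\nu=\lim_{\varepsilon\to 0}\varepsilon\nu(\varepsilon)$ the right limiting parameter, consistent with the effective Hamiltonian $\mathfrak{H}(z)$ in \cref{thm:1}). The formula $\varphi_{\varepsilon,s}=\varphi_\varepsilon(e^{i\nu(\varepsilon)s\omega}g)$ appearing in the heuristic \cref{sec:an-heur-argum} should be read with this same scaling in mind.
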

In other words, we are able to relate the quasi-classical evolution in the
interaction picture to the one not in interaction picture ``as it should
be'', \emph{i.e.}  by acting with the expected free evolution on both the
Spin and classical Boson subsystems. It follows that once we have
characterized the map $t\to \mathfrak{n}_t$, we have also a characterization
for the map $t\to \mathfrak{m}_t$.

\subsection{The limit of the Duhamel formula}
\label{sec:limit-duham-form}

We are now in a position to take the limit $\varepsilon\to 0$ of the Duhamel
formula \eqref{eq:3}. In taking this step, the factorized nature of the
spin-boson interaction helps greatly, essentially allowing to transform the
problem from quasi-classical to semiclassical, allowing us to avoid
completely the use the heavy machinery of quasi-classical calculus developed
in \citep[][\textsection2]{correggi2019arxiv} (that is however
\emph{necessary} whenever the interaction is not factorized as for the spin
boson). Let $\mathfrak{k}\in \mathfrak{L}^{\infty}(\mathscr{H})$ be a compact
operator on the Spin subsystem, then Duhamel's formula \eqref{eq:3} becomes
\begin{multline*}
  \tr_{\mathscr{H}}\Bigl([\hat{\Upsilon}_{\varepsilon}(t)](\eta)\mathfrak{k}\Bigr)- \tr_{\mathscr{H}}\Bigl([\hat{\Upsilon}_{\varepsilon}(s)](\eta)\mathfrak{k}\Bigr)\\= i \int_s^t \biggl(\tr\Bigl(  \Upsilon_{\varepsilon}(\tau)\varphi_{\varepsilon}(\tau)W_{\varepsilon}(\eta)\mathfrak{s}(\tau)\mathfrak{k}\Bigr)-\tr\Bigl(\mathfrak{k} \mathfrak{s}(\tau) 
  \varphi_{\varepsilon}(\tau) \Upsilon_{\varepsilon}(\tau)W_{\varepsilon}(\eta)\Bigr)\biggr)  \mathrm{d}\tau\;.
\end{multline*}
It is possible to exchange the trace w.r.t.\ $\mathscr{H}$ and the integral
by dominated convergence, using the bound for $\Gamma_{\varepsilon}$ assumed
in \cref{prop:3}, and its time propagation given by \cref{lemma:1}. By
\cref{prop:3}, and the definition of quasi-classical convergence, it follows
immediately that, along the common subsequence $\varepsilon_{n_k}\to 0$,
\begin{gather*}
  \lim_{k\to \infty}\tr_{\mathscr{H}}\Bigl([\hat{\Upsilon}_{\varepsilon_{n_k}}(t)](\eta)\mathfrak{k}\Bigr)= \tr_{\mathscr{H}}\Bigl(\hat{\mathfrak{n}}_t(\eta)\mathfrak{k}\Bigr)\;,\\
  \lim_{k\to \infty}\tr_{\mathscr{H}}\Bigl([\hat{\Upsilon}_{\varepsilon_{n_k}}(s)](\eta)\mathfrak{k}\Bigr)= \tr_{\mathscr{H}}\Bigl(\hat{\mathfrak{n}}_s(\eta)\mathfrak{k}\Bigr)\;.
\end{gather*}
Let us now focus on the interaction term, and in particular on the expression
\begin{equation*}
  \tr\Bigl(  \Upsilon_{\varepsilon}(\tau)\varphi_{\varepsilon}(\tau)W_{\varepsilon}(\eta)\mathfrak{s}(\tau)\mathfrak{k}\Bigr)\;,
\end{equation*}
the other one being analogous. Let us now decompose the operator
$\mathfrak{s}(\tau)\mathfrak{k}$ in its real positive, negative, and
imaginary positive, negative parts:
\begin{equation*}
  \mathfrak{s}(\tau)\mathfrak{k}= \mathfrak{sk}_{r+}-\mathfrak{sk}_{r-}+i(\mathfrak{sk}_{i+}-\mathfrak{sk}_{i-})\;,
\end{equation*}
with
$\mathfrak{sk}_{r+},\mathfrak{sk}_{r-},\mathfrak{sk}_{i+},\mathfrak{sk}_{i-}\geq
0$. Therefore, we have that
\begin{multline*}
  \tr\Bigl(  \Upsilon_{\varepsilon}(\tau)\varphi_{\varepsilon}(\tau)W_{\varepsilon}(\eta)\mathfrak{s}(\tau)\mathfrak{k}\Bigr)= \tr\Bigl(  \Upsilon_{\varepsilon}(\tau)\varphi_{\varepsilon}(\tau)W_{\varepsilon}(\eta)\mathfrak{sk}_{r+}\Bigr)- \tr\Bigl(  \Upsilon_{\varepsilon}(\tau)\varphi_{\varepsilon}(\tau)W_{\varepsilon}(\eta)\mathfrak{sk}_{r-}\Bigr)\\+i\biggl(\tr\Bigl(  \Upsilon_{\varepsilon}(\tau)\varphi_{\varepsilon}(\tau)W_{\varepsilon}(\eta)\mathfrak{sk}_{i+}\Bigr)- \tr\Bigl(  \Upsilon_{\varepsilon}(\tau)\varphi_{\varepsilon}(\tau)W_{\varepsilon}(\eta)\mathfrak{sk}_{i-}\Bigr)\biggr)\;.
\end{multline*}
Now, we would like to treat all these terms in the same fashion, so let us
focus on the first one. We can split the total trace in the two partial
traces, but we do it\emph{ in reverse order w.r.t.\ before}:
\begin{equation*}
  \tr\Bigl(  \Upsilon_{\varepsilon}(\tau)\varphi_{\varepsilon}(\tau)W_{\varepsilon}(\eta)\mathfrak{sk}_{r+}\Bigr)=\tr_{\mathcal{G}_{\varepsilon}^{\rm s}(\mathfrak{h})}\biggl(  \tr_{\mathscr{H}}\Bigl(\Upsilon_{\varepsilon}(\tau)\mathfrak{sk}_{r+}\Bigr) \varphi_{\varepsilon}(\tau)W_{\varepsilon}(\eta)\biggr)\;.
\end{equation*}
The partial trace w.r.t.\ to $\mathscr{H}$ is the expectation over a state of
a positive operator, so
\begin{equation*}
  \zeta_{\varepsilon}(\tau,\mathfrak{sk}_{r+}):= \tr_{\mathscr{H}}\Bigl(\Upsilon_{\varepsilon}(\tau)\mathfrak{sk}_{r+}\Bigr)\in \mathfrak{L}^1_+(\mathfrak{h})\;,
\end{equation*}
and we finally obtain
\begin{equation*}
  \tr\Bigl(  \Upsilon_{\varepsilon}(\tau)\varphi_{\varepsilon}(\tau)W_{\varepsilon}(\eta)\mathfrak{sk}_{r+}\Bigr)=\tr_{\mathcal{G}_{\varepsilon}^{\rm s}(\mathfrak{h})}\biggl(  \zeta_{\varepsilon}(\tau,\mathfrak{sk}_{r+}) \varphi_{\varepsilon}(\tau)W_{\varepsilon}(\eta)\biggr)\;.
\end{equation*}
The state $\zeta_{\varepsilon}(\tau,\mathfrak{sk}_{r+})$ is a semiclassical
(scalar) state, living on the Fock space. On one hand, by \cref{prop:3} and
the definition of quasi-classical convergence\footnote{Quasi-classical
  convergence is the pointwise convergence of Fourier transforms in weak-*
  topology, \emph{i.e.}  when tested with compact operators. Since
  $\mathfrak{sk}_{r+}$ is compact, we have pointwise convergence of
  $\hat{\Upsilon}_{\varepsilon}(\tau)$ traced together with
  $\mathfrak{sk}_{r+}$.} we know that
\begin{equation*}
  \zeta_{\varepsilon_{n_k}}(\tau,\mathfrak{sk}_{r+})\underset{k\to \infty}{\longrightarrow} \mathrm{d}\mu_{\tau,\mathfrak{sk}_{r+}}(z)= \tr_{\mathscr{H}}\bigl(\mathrm{d}\mathfrak{n}_{\tau}(z) \mathfrak{sk}_{r+}\bigr)\;.
\end{equation*}
On the other hand, by semiclassical calculus in infinite dimensions
\citep[see][]{ammari2008ahp}, we also know that
\begin{equation*}
  \lim_{k\to \infty}\tr_{\mathcal{G}_{\varepsilon_{n_k}}^{\rm s}(\mathfrak{h})}\biggl(  \zeta_{\varepsilon_{n_k}}(\tau,\mathfrak{sk}_{r+}) \varphi_{\varepsilon_{n_k}}(\tau)W_{\varepsilon_{n_k}}(\eta)\biggr)= \int_{\mathfrak{h}}^{}\alpha_{\tau}(z)e^{2i\Re \langle \eta  , z \rangle_{\mathfrak{h}}}  \mathrm{d}\mu_{\tau,\mathfrak{sk}_{r+}}(z)\;,
\end{equation*}
where the shorthand $\alpha_{\tau}(z)$ has been defined in
\eqref{eq:4}. Combining the two things, and repeating the same reasoning for
all the other remaining terms, we end up obtaining the following integral
equation for the map $t\to \mathfrak{n}_t$ (another dominated convergence
argument allows to pass the limit $\varepsilon_{n_k}\to 0$ inside the time
integral, this time exploiting the uniformity w.r.t.\ $\varepsilon\in (0,1)$
of the number operator bounds at any time).
\begin{proposition}
  \label{prop:4}
  The family of state-valued measures $\{\mathfrak{n}_t\}_{t\in \mathbb{R}}$
  of \cref{prop:3} satisfies the following transport equation for the Fourier
  transform, in the weak sense on $\mathfrak{L}^1(\mathscr{H})$:
  \begin{equation*}
    \hat{\mathfrak{n}}_t(\eta)-\hat{\mathfrak{n}}_s(\eta)=i \int_s^t\int_{\mathfrak{h}}^{}[\gamma_{\mathfrak{n}_{\tau}}(z),\mathfrak{s}(\tau)]  \alpha_{\tau}(z) e^{2i\Re \langle \eta  , z \rangle_{\mathfrak{h}}}\mathrm{d}\mu_{\mathfrak{n}_{\tau}}(z)  \mathrm{d}\tau\;.
  \end{equation*}
\end{proposition}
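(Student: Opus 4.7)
The plan is to start from the Duhamel identity \eqref{eq:3}, test both sides against an arbitrary compact operator $\mathfrak{k}\in\mathfrak{L}^{\infty}(\mathscr{H})$, and pass to the limit $\varepsilon_{n_k}\to 0$ along the common subsequence provided by \cref{prop:3}. For the left-hand side, the definition of quasi-classical convergence immediately yields $\tr_{\mathscr{H}}(\hat{\mathfrak{n}}_t(\eta)\mathfrak{k})-\tr_{\mathscr{H}}(\hat{\mathfrak{n}}_s(\eta)\mathfrak{k})$. For the right-hand side, I would first exchange the time integral with the $\varepsilon_{n_k}\to 0$ limit by dominated convergence, using the standard $N^{1/2}$-type bound $|\tr(\Upsilon_\varepsilon(\tau)\varphi_\varepsilon(\tau)W_\varepsilon(\eta)\mathfrak{A})|\leq 2\|\mathfrak{A}\|\|g\|_{\mathfrak{h}}\bigl[\tr(\Upsilon_\varepsilon(\tau)(\mathrm{d}\mathcal{G}_\varepsilon(1)+1))\bigr]^{1/2}$, which is uniform on $(0,1)\times[s,t]$ thanks to \cref{lemma:1}.

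For the $\tau$-pointwise limit of the integrand, I would follow the decomposition already sketched in the paragraphs preceding the statement. Writing $\mathfrak{s}(\tau)\mathfrak{k}$ as a linear combination of four positive compact operators and commuting the two partial traces, each resulting term takes the form $\tr_{\mathcal{G}_\varepsilon^{\rm s}(\mathfrak{h})}(\zeta_\varepsilon(\tau,\mathfrak{A})\varphi_\varepsilon(\tau)W_\varepsilon(\eta))$, with $\zeta_\varepsilon(\tau,\mathfrak{A})\in\mathfrak{L}^1_+(\mathcal{G}_\varepsilon^{\rm s}(\mathfrak{h}))$ a scalar Fock-space state. Quasi-classical convergence then gives $\zeta_{\varepsilon_{n_k}}(\tau,\mathfrak{A})\to\tr_{\mathscr{H}}(\mathrm{d}\mathfrak{n}_\tau(z)\mathfrak{A})$ as scalar Radon measures on $\mathfrak{h}$, and the Ammari-Nier semiclassical calculus \citep{ammari2008ahp} identifies the limit of the corresponding Fock-space trace as $\int_{\mathfrak{h}}\alpha_\tau(z)e^{2i\Re\langle\eta,z\rangle_{\mathfrak{h}}}\tr_{\mathscr{H}}(\mathrm{d}\mathfrak{n}_\tau(z)\mathfrak{A})$. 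Reassembling the four pieces, subtracting the analogous expression arising from the second term of \eqref{eq:3}, and using cyclicity of the trace on $\mathscr{H}$, the difference collapses to $\int_{\mathfrak{h}}\alpha_\tau(z)e^{2i\Re\langle\eta,z\rangle_{\mathfrak{h}}}\tr_{\mathscr{H}}([\gamma_{\mathfrak{n}_\tau}(z),\mathfrak{s}(\tau)]\mathfrak{k})\,\mathrm{d}\mu_{\mathfrak{n}_\tau}(z)$. Since $\mathfrak{k}$ is arbitrary compact and $\mathfrak{L}^1(\mathscr{H})$ is the predual of $\mathfrak{L}^{\infty}(\mathscr{H})$, this is precisely the stated weak identity.

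The main obstacle is the application of the Ammari-Nier semiclassical calculus uniformly in $\tau$. On the one hand, the scalar states $\zeta_\varepsilon(\tau,\mathfrak{A})$ must satisfy a uniform $\delta$-th moment number-operator estimate; this follows by inserting $\mathfrak{A}\otimes(\mathrm{d}\mathcal{G}_\varepsilon(1)+1)^\delta$ under the full trace and using $\mathfrak{A}\leq\|\mathfrak{A}\|$ to bound $\tr_{\mathcal{G}_\varepsilon^{\rm s}(\mathfrak{h})}(\zeta_\varepsilon(\tau,\mathfrak{A})(\mathrm{d}\mathcal{G}_\varepsilon(1)+1)^\delta)\leq\|\mathfrak{A}\|\tr(\Upsilon_\varepsilon(\tau)(\mathrm{d}\mathcal{G}_\varepsilon(1)+1)^\delta)\leq\|\mathfrak{A}\|K(\delta,C,\tau)$ via \cref{lemma:1}. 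On the other hand, their Wigner measures must be identified with $\tr_{\mathscr{H}}(\mathrm{d}\mathfrak{n}_\tau(z)\mathfrak{A})$, which is a direct consequence of the uniqueness clause of \cref{prop:2} applied in the scalar setting together with the observation that the Fourier transform of $\zeta_\varepsilon(\tau,\mathfrak{A})$ is precisely $\tr_{\mathscr{H}}(\hat{\Upsilon}_\varepsilon(\tau)(\eta)\mathfrak{A})$, which converges pointwise in $\eta$ to $\tr_{\mathscr{H}}(\hat{\mathfrak{n}}_\tau(\eta)\mathfrak{A})$ by construction of the subsequence.
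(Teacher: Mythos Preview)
Your proposal is correct and follows essentially the same route as the paper: test the Duhamel formula \eqref{eq:3} against a compact $\mathfrak{k}$, decompose $\mathfrak{s}(\tau)\mathfrak{k}$ into four positive compact pieces, commute the partial traces to reduce to scalar Fock-space states $\zeta_{\varepsilon}(\tau,\mathfrak{A})$, and invoke the Ammari--Nier semiclassical calculus together with dominated convergence via \cref{lemma:1}. The only minor imprecision is that your dominated-convergence bound should read $2\lVert\mathfrak{A}\rVert\,\lVert g\rVert_{\mathfrak{h}}\,\tr\bigl(\Upsilon_{\varepsilon}(\tau)(\mathrm{d}\mathcal{G}_{\varepsilon}(1)+1)^{1/2}\bigr)$ rather than the square root of the first moment, since only the $\delta=\tfrac12$ moment is assumed in \cref{prop:3}; this does not affect the argument.
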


\subsection{Uniqueness of the solution to the transport equation, uniqueness
  of the limit}
\label{sec:uniq-solut-transp}

The transport equation for the Fourier transform of $\mathfrak{n}_t$ can be
easily translated in an equation for the measure:
\begin{equation*}
  \gamma_{\mathfrak{n}_t}(z)\mathrm{d}\mu_{\mathfrak{n}_t}(z) - \gamma_{\mathfrak{n}_s}(z)\mathrm{d}\mu_{\mathfrak{n}_s}(z)= i\int_s^t  [\gamma_{\mathfrak{n}_{\tau}}(z),\mathfrak{s}(\tau)]  \alpha_{\tau}(z) \mathrm{d}\mu_{\mathfrak{n}_{\tau}}(z)\mathrm{d}\tau\;.
\end{equation*}
Now, let us fix $s=0$, and suppose that we have the quasi-classical
convergence at initial time
\begin{equation*}
  \Gamma_{\varepsilon_n}\underset{n\to \infty}{\longrightarrow} \mathfrak{m}\;.
\end{equation*}
It then follows that
\begin{equation*}
  \mathfrak{n}_0=\mathfrak{m}\;,
\end{equation*}
and the transport equation reads
\begin{equation*}
  \gamma_{\mathfrak{n}_t}(z)\mathrm{d}\mu_{\mathfrak{n}_t}(z) - \gamma_{\mathfrak{m}}(z)\mathrm{d}\mu_{\mathfrak{m}}(z)= i\int_0^t  [\gamma_{\mathfrak{n}_{\tau}}(z),\mathfrak{s}(\tau)]  \alpha_{\tau}(z) \mathrm{d}\mu_{\mathfrak{n}_{\tau}}(z)\mathrm{d}\tau\;.
\end{equation*}
The family of state-valued measures $\{\mathfrak{n}_t\}_{t\in \mathbb{R}}$
given by
\begin{equation*}
  \mathrm{d}\mathfrak{n}_t(z)= \tilde{\mathfrak{U}}_{t,0}(z)\gamma_{\mathfrak{m}}(z)\tilde{\mathfrak{U}}_{t,0}^{*}(z)\;,
\end{equation*}
with $\tilde{\mathfrak{U}}_{t,0}(z)$ the two-parameter unitary group on
$\mathscr{H}$ generated by
\begin{equation*}
  \alpha_{\tau}(z)\mathfrak{s}(\tau)
\end{equation*}
is easily checked to be a solution to the transport equation. Such solution
is actually \emph{unique}, as is proved in a general fashion in
\citep[][Proposition 5.3]{correggi2019arxiv}. Therefore, we have proved that
given
\begin{equation*}
  \Gamma_{\varepsilon_n}\underset{n\to \infty}{\longrightarrow} \mathfrak{m} \quad \mbox{and}\quad  \tr\Bigl(\Gamma_{\varepsilon}\bigl(\mathrm{d}\mathcal{G}_{\varepsilon}(1)+1\bigr)^{1/2}\Bigr)\leq C  \;,
\end{equation*}
there exists a subsequence $\varepsilon_{n_k}$ along which for any $t\in
\mathbb{R}$ we have the convergence
\begin{equation*}
  \Upsilon_{\varepsilon_{n_k}}(t)\underset{k\to \infty}{\longrightarrow} \mathfrak{n}_{t}\;,
\end{equation*}
with
\begin{equation*}
  \mathrm{d}\mathfrak{n}_t(z)= \tilde{\mathfrak{U}}_{t,0}(z)\gamma_{\mathfrak{m}}(z)\tilde{\mathfrak{U}}_{t,0}^{*}(z)\;.
\end{equation*}
By \cref{cor:1}, it also follows that for any $t\in \mathbb{R}$,
\begin{equation*}
  \Gamma_{\varepsilon_{n_k}}(t)\underset{k\to \infty}{\longrightarrow} \mathfrak{m}_{t}\;,
\end{equation*}
with
\begin{equation*}
  \mathrm{d}\mathfrak{m}_t(z)=\mathfrak{U}_{t,0}(z) \gamma(z) \mathfrak{U}^{*}_{t,0}(z) \: \diff \lf( e^{-it\nu\omega}\, _{\star} \,\mu \ri)(z) \;,
\end{equation*}
as stated in \cref{thm:1}. However, a couple of steps are still missing to
complete the proof of the latter.

First of all, one shall prove convergence along the original sequence of
convergence at initial time $\varepsilon_n\to 0$, rather than on some
existing subsequence $\varepsilon_{n_k}\to 0$. This is readily established
exploiting once more the uniqueness of the solution to the transport
equation. Suppose in fact that we have another subsequence
$\varepsilon_{n_j}\to 0$ of convergence for $\Upsilon_{\varepsilon_{n_j}}(t)$
at all times $t\in \mathbb{R}$, with possibly different limit measure
$\{\mathfrak{n}_t'\}_{t\in \mathbb{R}}$. Then, by the same argument as in
\cref{sec:limit-duham-form}, $\mathfrak{n}'_t$ would satisfy the very same
transport equation given in \cref{prop:4} for $\mathfrak{n}_t$. Since the
solution to that transport equation is unique, this would imply
$\mathfrak{n}'_t=\mathfrak{n}_t$. In other words, there is a unique possible
cluster point for the sequence $\Upsilon_{\varepsilon_{n}}(t)$, thus it
converges itself to the very same limit $\mathfrak{n}_t$. We can thus
conclude that, that if
\begin{equation*}
  \Gamma_{\varepsilon_n}\underset{n\to \infty}{\longrightarrow} \mathfrak{m}\quad \mbox{and}\quad  \tr\Bigl(\Gamma_{\varepsilon}\bigl(\mathrm{d}\mathcal{G}_{\varepsilon}(1)+1\bigr)^{1/2}\Bigr)\leq C  \;,
\end{equation*}
then for any $t\in \mathbb{R}$,
\begin{equation*}
  \Gamma_{\varepsilon_{n}}(t)\underset{k\to \infty}{\longrightarrow} \mathfrak{m}_{t}\;,
\end{equation*}
with
\begin{equation*}
  \mathrm{d}\mathfrak{m}_t(z)=\mathfrak{U}_{t,0}(z) \gamma(z) \mathfrak{U}^{*}_{t,0}(z) \: \diff \lf( e^{-it\nu\omega}\, _{\star} \,\mu \ri)(z) \;.
\end{equation*}

\subsection{Relaxing the regularity assumption on the expectation of the
  number operator}
\label{sec:relax-regul-assumpt}

The final step for the proof is to relax the initial time assumption
\begin{equation*}
  \tr\Bigl(\Gamma_{\varepsilon}\bigl(\mathrm{d}\mathcal{G}_{\varepsilon}(1)+1\bigr)^{1/2}\Bigr)\leq C
\end{equation*}
used in the above, to
\begin{equation*}
  \tr\Bigl(\Gamma_{\varepsilon}\bigl(\mathrm{d}\mathcal{G}_{\varepsilon}(1)+1\bigr)^{\delta}\Bigr)\leq C
\end{equation*}
for some $\delta>0$. This is done using standard approximation techniques and
density arguments, as detailed in \citep[][\textsection2]{ammari2011jmpa}. This
concludes the proof of \cref{thm:1}.

{\small \bibliographystyle{natalpha} \bibliography{bib.bib} }

\end{document}